\documentclass[11pt]{article}
\usepackage{graphicx} 

\usepackage[margin=2.5cm]{geometry}

\usepackage[skins,breakable]{tcolorbox}

\usepackage[ruled,algo2e]{algorithm2e}
\usepackage{algorithm}
\usepackage{algorithmic}
\SetKwInput{KwInput}{Input}
\SetKwInput{KwOutput}{Output}

\usepackage{bm}

\usepackage[hidelinks]{hyperref}
\usepackage{tikz}

\usepackage{float}

\usepackage{multicol}

\usepackage{amsmath}
\usepackage{amsthm}
\usepackage{amssymb}
\usepackage{bbm}

\usepackage{cleveref}

\usepackage{bibentry}

\usepackage{thmtools}
\usepackage{thm-restate}

\allowdisplaybreaks

\newcommand{\bE}{\mathbb{E}}

\newcommand{\bR}{\mathbb{R}}

\newcommand{\bZ}{\mathbb{Z}}

\newcommand{\cA}{\mathcal{A}}
\newcommand{\cB}{\mathcal{B}}
\newcommand{\cD}{\mathcal{D}}
\newcommand{\cE}{\mathcal{E}}
\newcommand{\cF}{\mathcal{F}}

\newcommand{\cI}{\mathcal{I}}

\newcommand{\cN}{\mathcal{N}}
\newcommand{\cP}{\mathcal{P}}
\newcommand{\cQ}{\mathcal{Q}}

\newcommand{\sign}{\mathrm{sign}}

\newcommand{\tAUGINDEX}{\textsf{Augmented-Index} }

\newcommand{\tCoin}{\textsf{Coin} }

\newcommand{\tGT}{\textsf{Greater-Than} }
\newcommand{\tGGT}{\textsf{Gap-Greater-Than} }

\newcommand{\GGT}{\textsf{Gap-Greater-Than} }

\newtheorem{theorem}{Theorem}[section]

\newtheorem{definition}[theorem]{Definition}
\newtheorem{claim}[theorem]{Claim}
\newtheorem{lemma}[theorem]{Lemma}

\newtheorem{fact}[theorem]{Fact}
\newtheorem{corollary}[theorem]{Corollary}

\title{The Communication Complexity of the Coin Problem}
\author{Adithya Diddapur\thanks{Department of Pure Mathematics and Mathematical Statistics, University of Cambridge, UK. This work was supported by the Engineering and Physical Sciences Research Council [Grant Ref: EP/Y028732/1]. Part of this work was also done whilst at the University of Bristol. \texttt{ard90@cam.ac.uk}} \and Pachara Sawettamalya\footnote{Princeton University. \ \href{mailto:ps3122@princeton.edu}{\url{ps3122@princeton.edu}}}}
\date{}

\begin{document}

\maketitle

\begin{abstract}
    We study the two-player communication complexity of the \tCoin problem.
    Here, two players, denoted Alice and Bob, are given up to $n$ individual i.i.d. random coins, and asked to output whether there are more heads or tails.
    An immediate approach is for Alice to send Bob her exact numbers of heads and tails, yielding a deterministic protocol with one-way communication cost $O(\log n)$ bits.
    Indeed, this is provably optimal in the related multi-player and streaming settings [Braverman, Garg, Woodruff, FOCS'20].

    Our contribution is to show that this is not the right answer in the one-way two-player setting.
    Instead, we prove a $\Omega(\log\log n)$ bit lower bound for the constant error-rate regime, and a $\Omega(\log(1/\delta))$ bit lower bound for arbitrarily small error rates, $\delta \geq n^{-0.1}$.
    We also complement these with a tight one-way protocol using $O(\log\log n + \log(1/\delta))$ bits of communication.

    Unlike the previous works, our lower bounds avoid intricate direct information-theoretic arguments.
    Instead, our $\Omega(\log(1/\delta))$ bound is via a standard fooling-set style argument, whilst our $\Omega(\log\log n)$ bound is via a novel yet simple reduction from the \tGGT problem.
    This reduction is our main technical contribution, and can be seen as loosely inspired by the \emph{Robust Communication Complexity} arguments of [Chakrabarti, Cormode, McGregor, STOC'08].

    Finally, we also show that our constant-error lower bound and protocol can be adapted to the multi-round setting: up to polynomial factors in $r$, the $r$-round communication complexity of the coin problem is $\Theta\left((\log\log n)^{1/r}\right)$ bits.
\end{abstract}

\section{Introduction}

The \emph{coin} problem is one of the most basic computational tasks: given $n$ independent unbiased coin tosses, determine whether there are more heads or tails. Whilst computing the answer is straightforward when all coin tosses are available in one place, the problem becomes considerably more subtle in non-centralized settings.

In this work, we primarily focus on the coin problem in the \emph{one-way} two-player communication model \cite{y79}. Here, the two players, Alice and Bob, each receive a subset of the up to $n$ coin tosses as their private inputs. Alice then sends a single message to Bob who must determine the majority outcome of the entire collection of coins. We denote this problem by $\tCoin_n$.
Our goal is to solve this with as little communication as possible\footnote{We remark that the \emph{coin problem} nomenclature is also used to refer to the related problem of distinguishing between coin tosses generated from either a $\text{Bernoulli}(\alpha)$ or $\text{Bernoulli}(\beta)$ distribution, where $\alpha\neq\beta$.}.

The coin problem admits a trivial zero-error protocol using $O(\log n)$ bits of communication: Alice simply sends the exact numbers of heads and tails in her input, allowing Bob to determine the overall majority exactly. At the other extreme, Bob may ignore Alice's message altogether and output a uniformly random guess, requiring no communication and succeeding with probability $1/2$. In addition, a single bit of communication already suffices to achieve success probability $3/4$: Alice sends the majority value among her own coin tosses, and Bob outputs this value whenever it agrees with the majority of his own input, breaking ties uniformly at random otherwise. 

Beyond this, however, the communication complexity of the coin problem is not well-studied. For example, it is unclear how much success probability can be achieved using two bits of communication, or how many bits are required to achieve success probability, say, $0.76$. Motivated by these questions, we set out to study the trade-off between communication and success probability. 

To this end, our main contribution is an (almost) complete characterization of this trade-off. Our first result shows that $3/4$ is indeed the sharp threshold beyond which super-constant communication becomes necessary. 

\begin{restatable}{theorem}{LOGLOG}\label{thm: loglog n}
Any (potentially randomised) one-way two-player communication protocol for $\tCoin_n$ with success probability $3/4+\varepsilon$, for any constant $\varepsilon>0$, must communicate $\Omega(\log\log n)$ bits.
\end{restatable}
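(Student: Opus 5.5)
We prove the theorem by reduction: an efficient one-way protocol for $\tCoin_n$ would give an efficient one-way protocol for \tGGT on a universe of size $N=\Theta(\log n)$. That problem needs $\Omega(\log N)=\Omega(\log\log n)$ bits. The reason is that \tGGT with a constant additive gap $g$ contains \tGreater-Than on $[N/g]$ (restrict to multiples of $g$). For \tGreater-Than on $[M]$, randomised one-way protocols need $\Omega(\log M)$ bits (Miltersen--Nisan--Safra--Wigderson, via \tAUGINDEX). I assume $\tCoin_n$ must be solved for every split of the coins between the players, with the coins themselves uniformly random.

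\textbf{The reduction.} Let $\Pi$ solve $\tCoin_n$ with success $3/4+\varepsilon$. Alice holds $x$ and Bob holds $y$, with $x,y\in\{1,\dots,N\}$, $N\approx \tfrac12\log_2 n - 1$, and promise $|x-y|\ge g$ for a constant $g=g(\varepsilon)$. The steps are:
\begin{enumerate}
\item Alice privately samples $m_A=4^x+1$ fair coins and Bob privately samples $m_B=4^y+1$. The counts are odd, so neither player has a tie, and $m_A+m_B\le n$.
\item They run $\Pi$, and Bob records whether the output $o$ differs from his own majority $s_B$.
\end{enumerate}
Let $E$ be the event that the two local majorities $s_A,s_B$ disagree. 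By symmetry $\Pr[E]=1/2$, independently of the magnitudes of the players' deviations.

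Write $S_A,S_B$ for the signed deviations (heads minus tails). On $E$ the true majority is the sign of whichever of $S_A,S_B$ has larger absolute value. Let $\eta=\Pr[|S_{\text{small}}|>|S_{\text{large}}|]$, where the larger instance has at least $4^g$ times as many coins.

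\textbf{The main technical step: bounding $\eta$.} We show $\eta=O(2^{-g/2})$. Set the threshold $t=2^{\max(x,y)}2^{-g/2}$ and bound
\[
\eta\le \Pr\bigl[|S_{\text{large}}|\le t\bigr]+\Pr\bigl[|S_{\text{small}}|>t\bigr].
\]
\begin{itemize}
\item The first term is $O(2^{-g/2})$, by anti-concentration of a binomial (Berry--Esseen, or a direct bound of $O(1/\sqrt m)$ on point masses) at scale $2^{-g/2}$ times the standard deviation.
\item The second term is at most $4^{\min}/t^2\le 2^{-g}$ by Chebyshev.
\end{itemize}
Choosing $g$ large in terms of $\varepsilon$ makes $\eta\le\varepsilon/2$.

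\textbf{Separating the two cases.}
\begin{itemize}
\item If $y>x$: the truth is $s_B$ off $E$, and also on $E$ unless the smaller instance wins. Hence the truth differs from $s_B$ with probability at most $\eta/2$. Since $\Pi$ errs with probability at most $1/4-\varepsilon$,
\[
\Pr[o\ne s_B]\le 1/4-\varepsilon+\eta/2.
\]
\item If $x>y$: the truth is $s_A\ne s_B$ on $E$ except with probability $\eta$. Hence
\[
\Pr[o\ne s_B]\ge (1/2-\eta/2)-(1/4-\varepsilon)=1/4+\varepsilon-\eta/2.
\]
\end{itemize}
With $\eta\le\varepsilon/2$ these probabilities are separated by $\Omega(\varepsilon)$ around $1/4$.

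\textbf{Amplification and conclusion.} Alice runs $k=O(1/\varepsilon^2)$ independent copies in one message, with fresh coins and fresh randomness for each. Bob outputs ``$x>y$'' iff more than a $1/4$ fraction of the copies have $o\ne s_B$. By Chernoff this decides \tGGT with probability $2/3$. The cost is $k\cdot \mathrm{CC}(\Pi)=O_\varepsilon(\mathrm{CC}(\Pi))$, so $\mathrm{CC}(\Pi)=\Omega(\log N)=\Omega(\log\log n)$.
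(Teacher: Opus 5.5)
Your proof is correct. It keeps the paper's core idea: exponentiate the \tGGT inputs into coin counts, then play concentration of the smaller pile against anti-concentration of the larger one, as in Lemma~\ref{lem: majority lines up with GT}. The decoding and parameters differ in a worthwhile way, though.

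The paper takes gap $\sqrt{m}$ so that the misalignment probability is $o(1)$. It has Alice append her local majority $X$, and Bob guesses randomly when $X=Y$ and otherwise compares $\cA$'s output with $Y$. This gives advantage about $\varepsilon$ in one run. The price is that it needs the hardness of $\GGT(m,\sqrt{m})$ at constant advantage. Also, since $2^{G_A}$ is even, it needs a property of the hard distribution (Fact~\ref{fact: ggt hard dist fact}) to make Alice's local ties negligible.

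Your route differs in three ways:
\begin{itemize}
\item Your odd counts $4^x+1$ make $\sign(S_A),\sign(S_B)$ uniform and independent of the magnitudes, so your analysis holds per input rather than over a hard distribution.
\item A constant gap $g(\varepsilon)$ suffices, because you only need $\eta\le\varepsilon/2$.
\item Alice sends no extra bit, since $\Pr[o\neq s_B]$ already separates the two cases around $1/4$.
\end{itemize}
The cost is that this bias is centred at $1/4$ rather than $1/2$. You therefore need the $O(1/\varepsilon^2)$-fold repetition, which is a constant overhead. Alternatively, a one-shot rescaling works: answer ``$x>y$'' whenever $o\neq s_B$, and with probability $1/3$ otherwise.

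Two small fixes are needed:
\begin{itemize}
\item $m_A+m_B$ is even, so $S_A+S_B=0$ can occur. Define $\eta$ with $|S_{\mathrm{small}}|\ge|S_{\mathrm{large}}|$; your threshold bound covers this verbatim.
\item Restricting to multiples of $g$ yields \tGT under the promise $u\neq v$. The \tAUGINDEX reduction handles this, or you can simply note that $\GGT(N,g)$ is at least as hard as $\GGT(N,\sqrt{N})$ by Theorem~\ref{thm: gap-gt hardness}.
\end{itemize}
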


Beyond the constant-error regime, we also consider the low-error regime where the error probability $\delta$ may be arbitrarily small, and here we prove a space dependency on $\delta$. 

\begin{restatable}{theorem}{LOGDELTA}\label{thm: log(1/delta)}
Any (potentially randomised) one-way two-player communication protocol for $\tCoin_n$ with success probability $1-\delta$, for any $\delta \geq n^{-0.1}$, must communicate $\Omega(\log(1/\delta))$ bits.
\end{restatable}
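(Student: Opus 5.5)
We prove the $\Omega(\log(1/\delta))$ bound by a fooling-set argument under a fixed input distribution, using Yao's principle. It suffices to show that any \emph{deterministic} one-way protocol with $c$ bits of communication errs with probability $\Omega(2^{-c})$ (or at least $\Omega(\delta')$ whenever $2^c \ll 1/\delta'$) over the random coins. Then success probability $1-\delta$ forces $c = \Omega(\log(1/\delta))$.

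We model the input as follows. Alice receives $m$ coins and Bob receives $n-m$ coins, with $m$ chosen so that her head-minus-tail difference $a$ is spread over $\Theta(\sqrt m)$ typical values. We take $m = \Theta(n)$ and assume $n$ is odd to avoid ties. The only relevant quantities are Alice's difference $a$ and Bob's difference $b$, and the answer is $\sign(a+b)$. A deterministic protocol is therefore a partition of Alice's possible values of $a$ into $2^c$ classes, together with Bob's decision rule $f(\text{class}, b)$.

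Fix $k = \lceil C/\delta \rceil$ for a suitable constant $C$. Restrict attention to the window of $k$ consecutive feasible values $a \in \{a_0, a_0+2, \dots, a_0+2(k-1)\}$ near $0$, and also to $b$ near $0$. Since $\delta \ge n^{-0.1}$, we have $k \le n^{0.1} \ll \sqrt n$. By the local CLT, every value of $a$ in this window has probability $\Theta(1/\sqrt n)$, and likewise every value of $b$ in the window $[-2k,2k]$. If $2^c < k/2$, pigeonhole gives at least $k/2$ pairs of values $a < a'$ in the window that fall in the same class and are adjacent among the values of that class. More crudely, some class contains two values $a<a'$, and summing over classes yields total gap structure $\Omega(k - 2^c)$ pairs. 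For each such adjacent pair, every $b$ with $-a' < b < -a$ fools Bob: the correct answers on $(a,b)$ and $(a',b)$ differ, but Bob's output is the same. Hence Bob errs on at least one of the two, contributing error mass $\Omega(1/\sqrt n \cdot 1/\sqrt n)$ per such $b$.

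The main obstacle is making the total error mass $\Omega(\delta)$ rather than $\Omega(k/n)$. Counting gives about $k$ fooled $(a,b)$ cells, each of mass $\Theta(1/n)$, which is too small. To fix this, rescale. Instead of unit windows, partition the typical range $|a| \le \sqrt n$ into $k$ blocks of width $\sqrt n/k$. If $2^c \le k/4$, then by pigeonhole some class contains values from two well-separated blocks, and more precisely most blocks share a class with some other block. A cleaner route is the following. Consider $a$ ranging over $[-\sqrt n,\sqrt n]$ and the message-class intervals of the induced function. A class contains an "inversion" at positions $a<a'$ whenever it is non-convex. Each of the at most $2^c$ classes restricted to the typical range is either an interval or has gaps. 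If all classes are essentially intervals, then there are at most $2^c$ intervals, so some interval has length at least $2\sqrt n/2^c$. For any $b$ with $-b$ strictly inside this interval, Bob's fixed answer is wrong on a constant fraction of the interval's mass. This gives error $\Omega\big((\sqrt n/2^c)^2/n\big) = \Omega(4^{-c})$. Non-interval classes can only create more fooled pairs: any two values $a<a'$ in one class fool every $b\in(-a',-a)$. So the same bound holds after taking, for each class, the span between its consecutive members. Concretely, we partition the typical range by the class boundaries, get at most $2^c$ sub-ranges per class, and argue as above. This needs care, and it is the step to check. It gives $\delta \ge \Omega(4^{-c})$, i.e.\ $c \ge \tfrac12\log(1/\delta) - O(1)$. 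The condition $\delta \ge n^{-0.1}$ ensures the relevant interval length $\sqrt n\cdot\delta^{1/2}$ stays large, so the CLT approximations (uniform density $\Theta(1/\sqrt n)$ on the relevant scales, and ties negligible) remain valid.
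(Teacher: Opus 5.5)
\textbf{There is a genuine gap, at the step you flagged: reducing an arbitrary partition to the interval case.} Your interval case is sound, once you restrict to $-b$ in the middle half of the long interval so that Bob's fixed answer is wrong on a constant fraction of its mass, and $\Omega(4^{-c})$ would suffice. The problem is the general case.

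A class need not break the typical range into at most $2^c$ runs. If Alice sends $\lfloor a/2\rfloor \bmod 2^c$, every maximal run is a single point, so the longest-interval argument certifies essentially nothing, even though this protocol has constant error.

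Charging pairs of consecutive members of a class does not fix this. Such a pair $a_i<a_{i+1}$ fools only the $b$ with $-b\in(a_i,a_{i+1})$. Each such $b$ contributes error mass at most $\min\{\Pr(A=a_i),\Pr(A=a_{i+1})\}\cdot\Pr(B=b)=O(1/n)$. So one class contributes $O(1/\sqrt n)$, and all classes together contribute $O(2^c/\sqrt n)$. That is far below $4^{-c}$ in the relevant range $2^c\le n^{0.05}$. Counting fooled pairs is the wrong measure; what matters is how much of a class's mass lies on \emph{each} side of $-b$.

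\textbf{The missing idea is to charge whole classes.} For fixed $j$ and $b$, Bob's output is fixed, so he errs with probability at least $\min\{\Pr(A\in I_j,A<-b),\,\Pr(A\in I_j,A>-b)\}$. This is at least $p(I_j)/4$ whenever $-b$ lies strictly between the conditional quartiles of $I_j$. If $-b$ lands there with probability $\Omega(p(I_j))$, the error is $\Omega(\sum_j p(I_j)^2)=\Omega(2^{-c})$ by Cauchy--Schwarz, whatever the shape of the classes.

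\textbf{How the paper handles this.} The paper makes that last requirement automatic by giving both players $n$ coins, so that $-B$ has exactly the law of $A$. Then $\Pr(A,-B\in I_j)=p(I_j)^2$. On the event that both fall below the conditional median of a heavy class and differ, even the optimal median-threshold Bob errs with probability $1/2$ by exchangeability. Here a class is heavy if $p(I_j)\ge 0.1\cdot 2^{-C}$, and by Markov the heavy classes carry total mass at least $0.9$. The hypothesis $\delta\ge n^{-0.1}$ is used only to make atoms of mass $O(1/\sqrt n)$ negligible against $p(I_j)\ge n^{-0.01}$.

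In your asymmetric setup ($m$ versus $n-m$ coins, $n$ odd), $A$ and $-B$ live on lattices of opposite parity. So you would additionally need a local-CLT comparison of their laws to show that $-b$ hits the quartile range of each heavy class with probability $\Omega(p(I_j))$.
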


We complement our lower bounds with a matching deterministic protocol. At a high level, our protocol shows that the two sources of difficulty identified by our lower bounds can be addressed independently: $O(\log\log n)$ bits suffice to surpass the $3/4$ barrier, and an additional $O(\log(1/\delta))$ bits are enough to amplify the success probability to $1-\delta$. 

\begin{restatable}{theorem}{PROTOCOL}\label{thm: protocol}
There is a deterministic one-way two-player communication protocol for $\tCoin_n$ with success probability $1-\delta$, for any $\delta>0$, using
$O(\log\log n+\log(1/\delta))$
bits of communication.
\end{restatable}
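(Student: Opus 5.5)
The plan is to have Alice send a \emph{floating-point approximation} of her own imbalance. Write $D_A = h_A - t_A$ and $D_B = h_B - t_B$ for the head-minus-tail counts of the two players, and let $b$ be the number of Bob's coins. The answer is $\sign(D_A + D_B)$. Only $D_A$ matters from Alice's side, and Bob knows $D_B$ and $b$ exactly.

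\textbf{Protocol.} Fix $k = \lceil \log_2(1/\delta)\rceil + O(1)$.
\begin{itemize}
\item If $|D_A| \le 2^k$, Alice sends $D_A$ exactly, using $O(\log(1/\delta))$ bits, and Bob answers correctly.
\item Otherwise Alice sends three things: the sign of $D_A$; the exponent $j = \lfloor \log_2 |D_A|\rfloor \le \log_2 n$, which costs $O(\log\log n)$ bits; and the leading $k$ bits of $|D_A|$ after its top bit, which costs $O(\log(1/\delta))$ bits.
\end{itemize}
From this message Bob learns an interval $I \ni D_A$ of length at most $2^{j-k} \le \delta' |D_A|$, where $\delta' = O(\delta)$. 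If $\sign(x + D_B)$ is the same for all $x \in I$, Bob outputs it; otherwise he outputs anything. The protocol is deterministic, so the success probability is over the i.i.d.\ coins. Bob can err only when $-D_B$ lies in $I$ (this also covers ties), and only in the second case of the protocol.

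\textbf{Bounding the error.} Condition on $D_A = x$ with $|x| > 2^k$, and let $m = \lceil \delta'|x| \rceil$. Since $|x| > 1/\delta$, we have $\delta'|x| \ge 1$, so the interval $I$ contains at most $m+1 \le 2\delta'|x| + 1 = O(\delta|x|)$ lattice points. This is exactly why small $|D_A|$ is sent exactly: it removes the additive $+1$ in the count. Every point of $-I$ has absolute value at least $|x|/2$.

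Now use a local limit (or Chernoff-plus-anticoncentration) bound for the $\pm1$ random walk $D_B$:
\[
\Pr[D_B = y] \le \frac{C}{\sqrt b}\, e^{-y^2/(4b)} \quad \text{for all } y,
\]
which holds up to constants uniformly in $b \ge 1$. (Bob's $b$ could also be zero, in which case there is no error.) This gives
\[
\Pr[\text{error} \mid D_A = x] \le O(\delta |x|)\cdot \frac{C}{\sqrt b}\, e^{-x^2/(16 b)} = O(\delta)\cdot \sup_{t\ge 0} t\, e^{-t^2/16} = O(\delta).
\]
Averaging over $x$, and over any distribution of the coin counts between the two players, gives total error $O(\delta)$. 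Rescaling $\delta$ by a constant finishes the proof, with total communication $O(\log\log n + \log(1/\delta))$.

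\textbf{Main obstacle.} The delicate step is the uniform local estimate $\Pr[D_B = y] \lesssim b^{-1/2} e^{-\Omega(y^2/b)}$ for all $y$ and $b$, including the regime $|y| \gg \sqrt b$ and small $b$. I would prove it directly from the binomial coefficient ratio $\binom{b}{(b+y)/2} / \binom{b}{\lfloor b/2 \rfloor} \le e^{-y^2/(2b)}$ combined with $\binom{b}{\lfloor b/2\rfloor} 2^{-b} = O(b^{-1/2})$. This avoids needing any separate tail argument.
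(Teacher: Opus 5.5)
Your argument is correct, and it takes a genuinely different route from the paper's.

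\textbf{The paper's route.} Alice sends the scale $x=\lfloor\log_2 a\rfloor$ of her \emph{number of coins}, her majority bit, and a quantized rank of $A$ under $\mathcal{N}(0,w)$ with $w=\frac{4}{\delta^2}2^x$. Bob then splits into cases:
\begin{itemize}
\item small $a$, where Alice's exact sum is sent;
\item small $b$, handled by anti-concentration of $A$;
\item $\lfloor\log_2 a\rfloor$ and $\lfloor\log_2 b\rfloor$ far apart, handled by the concentration/anti-concentration Lemma~\ref{lem: majority lines up with GT};
\item the two close together, handled by comparing ranks, which needs the binomial--Gaussian approximation of Fact~\ref{fact:dist_couple} and the density comparison of Fact~\ref{fact:pdf_bound}.
\end{itemize}

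\textbf{Your route.} You send a relative-precision (floating-point) encoding of $A$ itself, so the $O(\log\log n)$ term pays for the exponent of $|A|$ rather than of $a$. A single local estimate $\Pr[D_B=y]\le Cb^{-1/2}e^{-y^2/(4b)}$ then absorbs all of the paper's cases at once. Indeed, with $t=|x|/\sqrt b$, the quantity $\delta|x|\cdot b^{-1/2}e^{-x^2/(16b)}=\delta\, t e^{-t^2/16}$ is $O(\delta)$ uniformly.

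\textbf{What each buys.} Your approach gives a shorter and more elementary proof. It also gives a stronger guarantee: your bound holds conditioned on every value of $D_A$, so only Bob's coins need to be random. The paper's small-$b$ and unbalanced cases, by contrast, rely on the randomness of Alice's coins. The paper's decomposition mirrors the lower bound, since the same key lemma drives the reduction from \textsf{Gap-Greater-Than}. It also isolates the $\log\log n$ cost in a single comparison of $\lfloor\log_2 a\rfloor$ with $\lfloor\log_2 b\rfloor$. That is exactly the step Section~\ref{app: multi-round bounds} replaces by an $r$-round \textsf{Greater-Than} protocol.

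\textbf{A small fix to your last paragraph.} The ratio bound $\binom{b}{(b+y)/2}/\binom{b}{\lfloor b/2\rfloor}\le e^{-y^2/(2b)}$ fails in tiny cases: for $b=y=2$ the ratio is $1/2>e^{-1}$. The telescoping product actually gives $e^{-y^2/(2(b+1))}$ for even $b$ and $e^{1/4}e^{-y^2/(2(b+1))}$ for odd $b$. This still implies the $Cb^{-1/2}e^{-y^2/(4b)}$ bound you actually use, because $2(b+1)\le 4b$ for $b\ge 1$.
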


Combined, our results show that, for $n$ coin tosses, and error probability $\delta < 1/4$, the coin problem has communication complexity $\Theta(\log\log n + \log(1/\delta))$.


\paragraph{Multi-Round Communication.}

Finally, we also extend our results to the more powerful multi-round communication setting, where Alice and Bob can take turns exchanging messages multiple times.
In particular, we obtain a tradeoff between the communication required, and number of rounds allowed, and this tradeoff is also tight for any constant number of rounds.


\begin{restatable}{them}{roundcommtradeoff}
There is a deterministic $r$-round two-player communication protocol for $\tCoin_n$ that succeeds with probability at least $99/100$ using $O((\log\log n)^{1/r}\cdot r\log r)$ bits of communication.

Conversely, every (possibly randomized) $r$-round two-player communication protocol for $\tCoin_n$ that succeeds with probability at least $99/100$ must communicate $\Omega((\log\log n)^{1/r}/r^2)$ bits.
\label{thm:multiround}
\end{restatable}

Whilst this is stated for the error-rate $\delta = 1/100$, these bounds also trivially extend to any constant error rate $\delta < 1/4$.

Finally, we remark that whilst our lower bounds are stated for potentially randomised algorithms, this is not actually required. Instead, by viewing a randomised algorithm as a distribution over deterministic algorithms, there must exist a fixing of a randomness (for which the algorithm becomes deterministic) which achieves the same success probability. 

\subsection{Connections to the Streaming Setting}

While the coin problem has been studied extensively from the perspective of computational complexity \cite{bv10,s13,cgr14,lv18}, it has also attracted significant recent attention in the \emph{streaming} setting. Here, the outcomes of the $n$ coin tosses are revealed sequentially, and the goal is to determine the majority outcome using as little memory as possible. The same simple algorithm as before maintains the exact number of heads and tails seen so far, yielding an $O(\log n)$-space upper bound.

Establishing matching lower bounds is considerably more challenging. The standard approach for proving one-pass streaming lower bounds is to reduce the problem to a one-way (multi-player) communication game, where the stream is partitioned among several players who take turns communicating sequentially. At the end, the final player must produce the desired answer.

The simplest such reductions involve only two players, Alice and Bob, which is precisely the setting primarily studied in this work. Unfortunately, this approach is insufficient for proving the optimal $\Omega(\log n)$ streaming space lower bound for the coin problem in the constant error regime. Braverman, Garg, and Woodruff, in their breakthrough work \cite{bgw20}, observed that when Alice and Bob each receive exactly half of the stream, the resulting two-player problem admits an $O(1)$-bit protocol with constant error probability. We stress that this is for the specific scenario where both players know exactly how many coins are held by the other. Thus, the standard two-player reduction does not initially seem capable of yielding any super-constant lower bounds. To overcome this barrier, they detoured to the \emph{multi-player} communication model where they developed a novel information-theoretic argument, obtaining the first optimal $\Omega(\log n)$ space lower bound.

The techniques introduced by \cite{bgw20} have had a lasting impact beyond the coin problem itself, and have since been adapted to establish a variety of fundamental streaming lower bounds \cite{bbs22,bglwwz24,ahny24,lwz25,ghps26,sy26}. Our work complements this line of research by revisiting the two-player communication perspective on the problem.

Precisely, our Theorem~\ref{thm: loglog n} shows that, in contrast to the equal-partition of the stream considered by \cite{bgw20}, a super-constant lower bound of $\Omega(\log\log n)$ for the streaming coin problem can already be obtained when the stream is partitioned \emph{unequally} between Alice and Bob. While our lower bound is exponentially weaker than the optimal $\Omega(\log n)$ bound of \cite{bgw20}, our approach appears substantially simpler and, as a bonus, holds for a larger range of constant success probabilities\footnote{The $\Omega(\log{n})$ lower bound of \cite{bgw20} only applies to the high constant success probability regime, stated in their work as those above $0.999$.}.

\subsection{Our Approaches}
\label{subsec:approach}

\paragraph{The $\Omega(\log\log n)$ Lower Bound.}

Our approach begins with a simple key lemma: if one player holds many more coins than the other, then this player's majority element will also be the majority of all coins, with good probability over the coins.
The proof of this is remarkably simple - we combine both a standard concentration and anti-concentration bound over two binomial distributions with very different numbers of trials (formalised in Lemma \ref{lem: majority lines up with GT}).

This directly leads to our proof approach: we can use the canonical \tGT problem to construct a reduction where the players are unaware who holds more coins.
Here, Alice and Bob hold $G_A, G_B\in[n]$ respectively, and are asked to determine whether $G_A > G_B$.
Then, via our key lemma, we can argue that when the players are able to determine the majority element with sufficient (constant) probability, then they can use this to also infer who has more coins, solving an underlying \tGT instance.

A priori, this would suggest a $\Omega(\log n)$ lower bound, where each player simulates precisely as many coins as per their \tGT input.
However, the devil is in the details, and this turns out not to be the full story.
Instead, for our key lemma, and hence our reduction, to hold, we require the \emph{ratio} $G_A/G_B$ to be sufficiently large (or small).
I.e. we require a multiplicative difference, rather than a purely additive one.
We overcome this hurdle in two ways:
\begin{enumerate}
    \item First, instead of simulating $G_A$ and $G_B$ coins respectively, the players will instead simulate $2^{G_A}$ and $2^{G_B}$ many.
    \item Next, instead of reducing from \textsf{Greater-Than}, we will reduce from the related \tGGT problem.
    This is a promise version of \tGT where $\vert G_A - G_B \vert$ is sufficiently large.
\end{enumerate}
Together, these two steps are sufficient to guarantee the multiplicative difference required.

\flushbottom

This is also precisely why we obtain a $\Omega(\log\log n)$ lower bound.
Our final reduction is from an instance of $\tGGT$ over the input range $[m]$, which requires (one-way) communication $\Omega(\log m)$.
At the same time, the players simulate up to $2^{m+1}$ distinct coins.
This allows us to re-parameterize $m = \Theta(\log{n})$, with which the $\Omega(\log m)$ lower bound translates to $\Omega(\log \log{n})$.

Our argument can be seen as inspired by the \textit{Robust Communication Complexity} arguments of \cite{ccm08}, particularly when it comes to the stochastic nature of the simulated coins. This is an approach to communication complexity where input elements are randomly partitioned between players.
Without these arguments, we could obtain a bound on protocols with success probability $7/8 + \varepsilon$, but using these ideas allows us to boost this to $3/4 + \varepsilon$.
However, unlike most robust communication complexity arguments, the randomness behind ours solely arises from the stochastic nature of the input, rather than through how the input elements are partitioned between players.

\paragraph{The $\Omega(\log(1/\delta))$ Lower Bound.}
Our second lower bound is considerably more standard, and follows from a fooling-set style argument.
The key observation is that Alice's message acts as a function whose pre-images form a partition of her input space. Then, upon receiving any particular message, Bob must produce an output that is simultaneously correct for (almost) every input in the corresponding pre-image.

More concretely, suppose Alice sends a $C$-bit message and the protocol achieves error $\delta$. Then, it must be that her input space is partitioned into $2^C$ subsets, corresponding to each message, namely $I_1,\ldots,I_{2^C}$. Let $\rho_1,\ldots,\rho_{2^C}$ denote the probability masses of these subsets respectively. In particular, conditioned on any particular message, Bob has no information about Alice's input beyond the fact that it belongs to the corresponding subset. Then, as the crux of our proof, we show that under the condition that \emph{both} Alice and Bob receive inputs from the same $I_j$ (which occurs with probability $\rho_j^2$), Bob must err with constant probability. As a result, the total error rate must be at least $\Omega(1)\cdot \sum \rho_j^2 \geq \Omega(2^{-C})$ (by Cauchy-Schwarz), from which it follows that $C=\Omega(\log(1/\delta)).$

\paragraph{The Upper Bound.}
We now sketch our $O(\log\log n)$-bit protocol for any error-rate $\delta = O(1)$. Our protocol is inspired by the simple $O(1)$-bit protocol of \cite{bgw20} for the special case in which Alice and Bob each receive exactly half of the coins. In their protocol, Alice sends Bob the number of standard deviations by which the sum of her coin tosses deviates from its expectation, rounded to the nearest multiple of a small constant (e.g., $10^{-4}$). To handle the more general setting in which Alice and Bob may receive arbitrary numbers of coins, we make two key modifications.

\begin{enumerate}
\item Suppose Alice and Bob receive comparable numbers of coins, say within a factor of $100$. We observe that a straightforward refinement of the protocol of \cite{bgw20}, in which Alice rounds to a much finer precision (e.g., $10^{-10}$), continues to succeed. The finer rounding compensates for the discrepancy between Alice's and Bob's standard deviations, while increasing the message length by only a constant number of bits.

\item Now suppose one player receives significantly more coins than the other, say by a factor of at least $100$.
Then, by the same key technical lemma we made earlier, the majority element of the player with more coins will also be the majority element of all coins (with good probability), and this is the idea we will lean on.
In this case, Alice sends Bob her majority bit, together with an approximate count of her number of coins, rounded to the nearest power of two.
Bob then compares this estimate to his own number of coins.
If Alice appears to hold significantly more coins, he outputs her majority bit; otherwise, if he holds significantly more coins, he outputs his own majority bit.
\end{enumerate}

Let us first reason about the communication cost. In the first case, the sum of Alice's coin tosses lies within $O(1)$ standard deviations of its expectation with high constant probability (via Chebyshev's inequality). Given that this occurs, Alice's rounded value requires only $O(1)$ bits to encode. In the second case, Alice only needs to communicate an estimate of her number of coins, rounded to the nearest power of two. Since she holds at most $n$ coins, this can be encoded by $O(\log\log n)$ bits. 

The full implementation of this protocol (\Cref{sec:UB}) extends to cover the remaining regime of $\delta > 0$.
However, we remark that the high level ideas remain unchanged, and instead it is only the rounding schemes which are affected.

\paragraph{Multi-Round Communication Tradeoffs.}
Our multi-round results are obtained via minor adaptations of our one-way bounds. For the lower bound, we observe that our reduction also preserves the number of rounds. Thus, it suffices to invoke a previously known $r$-round lower bound for \tGT \cite{sv08}, yielding our coin problem lower bound.

For the upper bound, we observe that, for constant error rates, the protocol developed above ultimately boils down to comparing the \emph{logarithms} of the numbers of coins held by Alice and Bob. Therefore, replacing our one-way protocol with a more efficient $r$-round protocol for \tGT yields the desired bound.

\paragraph{AI Declaration.}
AI tools were only used as a reference for grammar, and not for any mathematical work.

\section{Preliminaries}

\paragraph{Notations.} Given a real number $x$, we use $\sign(x) := \frac{|x|}{x} \in \{\pm 1\}$ to denote its sign. If $x = 0$, either sign is correct, but for our lower bounds it will also be useful to think of $\sign(x)$ as not existing in this case.
We use $\log n$ to denote the base-2 logarithm of $n$, and $\ln(n)$ to denote its natural logarithm.
For any integers $a,b$, we shorthand $[a,b]$ for $[a,a+1,\ldots,b]$. We also denote $[n] = \{1,\dots,n\}$ for any positive integer $n$.

\paragraph{The Coin Problem.} The \emph{coin} problem in the two-player communication setting is defined as follows. Let $n$ be a positive integer and let $\delta \in (0,\frac{1}{2}]$ be an error parameter. Alice is given $a \in [n]$ i.i.d. fair coin tosses, while Bob is given $b \in [n]$ i.i.d. fair coin tosses. The players may communicate back and forth, and in the end, they must determine the majority outcome of all tosses with probability at least $1-\delta$ over the randomness of the coin toss. We say that a protocol is \emph{one-way} if it consists of a single message from Alice to Bob. More generally, a protocol is said to use \emph{$r$ rounds} if it consists of at most $r$ alternating messages between Alice and Bob.

Associating heads with ``$+1$'' and tails with ``$-1$'', we denote Alice's and Bob's sums of coin tosses by $A$ and $B$, respectively. Then $A$ and $B$ are distributed according to $2\cdot \mathcal{B}(a,1/2)-a$ and $2\cdot \mathcal{B}(b,1/2)-b$, respectively, where $\mathcal{B}(\cdot)$ denotes the binomial distribution. Equivalently, the players' objective is to determine $\sign(A+B)$.

We denote by $C_A$ and $C_B$ the sets of coins given to Alice and Bob, and by $C_{AB}=C_A\cup C_B$ the combined set of coins. We will occasionally abuse notation by writing $\sign(C_A)=\sign(A)$, $\sign(C_B)=\sign(B)$, and $\sign(C_{AB})=\sign(A+B)$ to refer to Alice's local majority, Bob's local majority, and the overall majority, respectively.

\paragraph{The Gaussian Distribution.} For any variance parameter $\sigma^2>0$, we denote by $\cN\left(0,\sigma^2\right)$ the Gaussian distribution with mean $0$ and variance $\sigma^2$. Its probability density function (pdf) and cumulative distribution function (cdf) are defined respectively as
$$\phi_{\sigma^2}(z) := \frac{1}{\sqrt{2\pi}\sigma}\cdot \exp\left(-\frac{z^2}{2\sigma^2}\right)
\hspace{5mm}\text{and}\hspace{5mm}
\Phi_{\sigma^2}(z) := \int_{-\infty}^{z} \phi_{\sigma^2}(t),dt.$$

\paragraph{Quantiles and ranks.} For a distribution $\cD$ supported over the reals and $q\in[0,1]$, we define the \emph{$q$-quantile} of $\cD$ as the smallest value $x\in\mathbb{R}$ such that $\Pr_{y\sim\cD}(y\leq x)\geq q.$
Equivalently, $x$ satisfies $\Pr_{y\sim\cD}(y<x)<q$ and $\Pr_{y\sim\cD}(y\leq x)\geq q.$ Conversely, the \emph{rank} of $x\in\mathbb{R}$ with respect to $\cD$ is defined as $\Pr_{y\sim\cD}(y<x).$

\subsection{Useful Inequalities}

The following standard facts will also be useful to us.

\begin{fact}\label{fact: abs val vs sign}
    For all $x,y\in\bR$, if $\vert x \vert \geq \vert y \vert$, then $\sign(x + y) = \sign(x)$.
\end{fact}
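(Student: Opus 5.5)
We need to prove Fact~\ref{fact: abs val vs sign}: if $|x|\ge|y|$ then $\sign(x+y)=\sign(x)$. Under the paper's convention, $\sign(0)$ is ``either sign is correct'' (or undefined for the lower bounds). So the statement should be read as holding whenever $x+y\neq 0$, with both sides then defined. Equivalently, the conclusion is that $x+y$ lies weakly on the same side of zero as $x$, i.e.\ $x(x+y)\ge 0$. I will prove that inequality and then translate it back into signs.

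The plan is a one-line algebraic argument. Write $x(x+y)=x^2+xy$. Since $xy\ge -|x||y|$ and $|y|\le|x|$, we get
\[
x^2+xy\;\ge\; |x|^2-|x||y| \;=\; |x|\bigl(|x|-|y|\bigr)\;\ge\;0 .
\]
Translating back to signs, there are two cases.
\begin{enumerate}
\item If $x+y\neq 0$, then $x\neq 0$, since otherwise $|y|\le 0$ forces $y=0$. Then $x(x+y)\ge 0$ with both factors nonzero gives $x(x+y)>0$, so $x$ and $x+y$ share a sign.
\item If $x+y=0$, then either sign is acceptable by convention, and the statement holds vacuously or trivially.
\end{enumerate}
An alternative route is a direct case split on $\sign(x)$. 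For instance, if $x>0$ then $x+y\ge x-|y|\ge x-|x|=0$, and the case $x<0$ is symmetric.

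There is no real obstacle here. The only point requiring care is the boundary case $|x|=|y|$ with $y=-x$, where $x+y=0$ and the sign is ambiguous. I will note explicitly that the fact is invoked in the lower bounds only in situations where this tie either has probability zero or is harmlessly absorbed by the ``either sign'' convention. For strict $|x|>|y|$, the argument yields $x+y\neq 0$ and genuine equality of signs.
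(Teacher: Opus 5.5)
Your proof is correct and complete. The paper states this as a standard fact without proof, and your bound $x(x+y)\ge |x|\bigl(|x|-|y|\bigr)\ge 0$, together with the observation that $x+y\neq 0$ forces $x\neq 0$, is exactly what is needed.

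One small correction to your closing remark. The paper uses this fact in Lemma~\ref{lem: majority lines up with GT}, which feeds both the lower and upper bounds, and in Claim~\ref{claim:protocol_line3}. Each time it first establishes the strict inequality $|x|>|y|$, so the tie $y=-x$ is ruled out by construction, not by a probability-zero or ``either sign'' argument.
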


\begin{fact}\label{obs: event when majority not useful}
    $\Pr(\sign(A) = \sign(B)) = 1/2$, and when this occurs, $\sign(C_{AB}) = \sign(C_A) = \sign(C_B)$.
\end{fact}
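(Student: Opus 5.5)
The plan is to prove the two parts separately, using only the independence of $A$ and $B$ and the symmetry of each about $0$. The one subtlety is the tie $A=0$ or $B=0$, which can occur when $a$ or $b$ is even. I will adopt the convention from the preliminaries that when $x=0$ either sign is acceptable. Concretely, I treat $\sign(0)$ as an independent uniformly random element of $\{\pm1\}$.

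\textbf{Probability $1/2$.} The map that flips every one of Alice's coins preserves the distribution of $C_A$ and sends $A\mapsto -A$. Hence $A$ is symmetric, with $\Pr(A>0)=\Pr(A<0)$. Under the tie-breaking convention this gives $\Pr(\sign(A)=+1)=\Pr(A>0)+\tfrac12\Pr(A=0)=\tfrac12$, so $\sign(A)$ is a uniform $\pm1$ bit. The same holds for $\sign(B)$. The two bits are independent, because they are functions of disjoint sets of independent coins together with independent tie-breaking randomness. Therefore
\[
\Pr(\sign(A)=\sign(B))=\tfrac12\cdot\tfrac12+\tfrac12\cdot\tfrac12=\tfrac12 .
\]
Alternatively, one can condition on Bob's coins and use only the symmetry of $A$, which avoids needing the symmetry of $B$.

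\textbf{Agreement of majorities.} Suppose $\sign(A)=\sign(B)=s$; by symmetry take $s=+1$. Then $A\ge 0$ and $B\ge 0$, so $A+B\ge 0$, and there are two cases.
\begin{itemize}
\item If at least one of $A,B$ is strictly positive, then $A+B>0$, and so $\sign(C_{AB})=+1=\sign(C_A)=\sign(C_B)$.
\item If $A=B=0$, then $A+B=0$, and any sign is a correct answer for the overall majority, in particular $s$.
\end{itemize}
The same reasoning applies to $s=-1$. One can also view this as \Cref{fact: abs val vs sign} applied in the same-sign case, since adding two numbers of equal sign never changes the sign.

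\textbf{Main obstacle.} The mathematics is trivial; the only thing needing care is the tie convention. With the alternative convention that $\sign(0)$ ``does not exist'', the event $\sign(A)=\sign(B)$ excludes ties. Its probability is then exactly $2\Pr(A>0)\Pr(B>0)$, which is at most $\tfrac12$ rather than equal to it. I would state the fact explicitly under the random tie-breaking convention, and remark that only the second part is needed when ties are excluded.
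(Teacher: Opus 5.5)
Your proof is correct. The paper states this as a standard fact without proof, and your argument is the natural one: symmetry of $A$ makes $\sign(A)$ a uniform bit independent of $\sign(B)$, and adding two numbers of the same sign preserves that sign.

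Your tie analysis is also right. In the second part, when $A=B=0$, you read $\sign(C_{AB})$ in the paper's ``either sign is correct'' sense, which is the right reading; a fresh random tie-break there would not literally equal $s$. Under the ``does not exist'' convention, the bound $\Pr(\sign(A)=\sign(B))\le 1/2$ that you derive is all the proof of \Cref{thm: loglog n} needs, since it only upper-bounds $\Pr(\cE)$. This matters because there $a=2^{G_A}$ and $b=2^{G_B}$ are even, so under that convention the inequality is in fact strict.
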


\begin{fact}
    For any $0 < r \leq s$, we have $\phi_r(z) \leq \sqrt{\frac{s}{r}} \cdot \phi_s(z)$.
\label{fact:pdf_bound}
\end{fact}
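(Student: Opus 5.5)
We need to prove Fact~\ref{fact:pdf_bound}: for $0<r\le s$, $\phi_r(z)\le \sqrt{s/r}\,\phi_s(z)$. Here the subscripts $r$ and $s$ play the role of variances.

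The plan is a direct pointwise comparison of the two explicit densities. Writing out the definitions,
\[
\frac{\phi_r(z)}{\phi_s(z)} = \frac{\frac{1}{\sqrt{2\pi r}}\exp\left(-\frac{z^2}{2r}\right)}{\frac{1}{\sqrt{2\pi s}}\exp\left(-\frac{z^2}{2s}\right)} = \sqrt{\frac{s}{r}}\cdot \exp\left(-\frac{z^2}{2}\left(\frac{1}{r}-\frac{1}{s}\right)\right).
\]
This ratio is well defined since $\phi_s(z)>0$ for every $z$.

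The remaining step is to observe that $r\le s$ gives $\frac1r\ge\frac1s$. Hence the exponent $-\frac{z^2}{2}\left(\frac1r-\frac1s\right)$ is non-positive for every real $z$, and the exponential factor is at most $1$. Multiplying through by $\phi_s(z)>0$ yields $\phi_r(z)\le\sqrt{s/r}\,\phi_s(z)$, as claimed.

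There is no genuine obstacle. The only point requiring care is the notation: in the paper, $\sigma$ in the definition of $\phi_{\sigma^2}$ is the standard deviation, while the subscripts $r,s$ are variances. The normalising constants are therefore $1/\sqrt{2\pi r}$ and $1/\sqrt{2\pi s}$, and this is what produces the factor $\sqrt{s/r}$ rather than $s/r$.
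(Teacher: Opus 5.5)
Your proof is correct: the density ratio is $\sqrt{s/r}\,\exp\!\bigl(-\tfrac{z^2}{2}(\tfrac{1}{r}-\tfrac{1}{s})\bigr)$, and the exponential factor is at most $1$ because $r\le s$. The paper lists this as a standard fact without proof, and your direct pointwise computation, including the remark that the subscripts are variances, is exactly the argument it relies on.
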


\begin{fact}
For any positive integer $n$, the Gaussian distribution $\mathcal{N}(0,n)$ and the shifted binomial distribution $2 \cdot \mathcal{B}(n,1/2)-n$ have statistical distance at most $O(1/\sqrt{n})$. Equivalently, for every $z \in \mathbb{R}$, we have
$$\left|\Pr\left[\mathcal{N}(0,n) \leq z\right] - \Pr\left[2 \cdot \mathcal{B}(x,1/2)-x \leq z\right]\right|
\leq O\left(\frac{1}{\sqrt{n}}\right).
$$
\label{fact:dist_couple}
\end{fact}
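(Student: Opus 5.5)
The plan is to read the statement as a bound on the Kolmogorov (CDF) distance, which is the equivalent form given in the display, and to obtain it directly from the Berry--Esseen theorem. (The $x$ in the display is a typo for $n$.) Taken literally, total variation distance would not work: the shifted binomial is supported on a lattice and the Gaussian is absolutely continuous, so their total variation distance is $1$. Every later use of this fact only compares probabilities of half-lines, so the CDF form is what we need.

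First, write $2\cdot\mathcal{B}(n,1/2)-n = \sum_{i=1}^n X_i$, where $X_1,\dots,X_n$ are i.i.d. uniform on $\{\pm1\}$. Each $X_i$ satisfies
$$\bE[X_i]=0,\qquad \bE[X_i^2]=1,\qquad \rho:=\bE|X_i|^3=1.$$
The Berry--Esseen theorem then gives, for every $t\in\bR$,
$$\left|\Pr\left[\frac{1}{\sqrt n}\sum_{i=1}^n X_i \le t\right]-\Phi_1(t)\right| \le \frac{C_{\mathrm{BE}}\,\rho}{\sqrt n}=\frac{C_{\mathrm{BE}}}{\sqrt n},$$
where $C_{\mathrm{BE}}$ is an absolute constant ($C_{\mathrm{BE}}<0.5$ is known).

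Second, rescale. For any $z\in\bR$, set $t=z/\sqrt n$. We have $\Pr\left[\sum_i X_i\le z\right]=\Pr\left[\frac{1}{\sqrt n}\sum_i X_i\le t\right]$. Also $\Phi_1(z/\sqrt n)=\Phi_n(z)=\Pr[\mathcal{N}(0,n)\le z]$, by the change of variables $u=\sqrt n\, s$ in the integral defining $\Phi_n$. Substituting these into the previous inequality yields the claimed bound uniformly in $z$. The same argument applies to strict inequalities $\le z$ versus $<z$, since the Gaussian CDF is continuous and Berry--Esseen also holds for left limits. This is useful later for the quantile and rank statements.

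There is essentially no obstacle once Berry--Esseen is assumed as a standard result. If instead a self-contained proof is wanted, the fallback is a local limit estimate via Stirling's formula. One would show that $\binom{n}{k}2^{-n}=2\phi_n(2k-n)\,(1+O(\cdot))$ in the bulk $|2k-n|\le n^{0.6}$, which holds with multiplicative error $O(|2k-n|^3/n^2+1/n)$. One would then sum these estimates against the Gaussian integral over intervals of length $2$, and handle the tails $|2k-n|>n^{0.6}$ with a Chernoff bound. The delicate part of that route is controlling the accumulated discretisation error: one must match each lattice point to an interval of length $2$ centred at it, so that the first-order errors cancel and the total is $O(1/\sqrt n)$ rather than $O(1)$.
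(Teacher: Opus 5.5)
Your proposal is correct and is the standard justification. The paper gives no proof of this fact and simply lists it among standard facts; Berry--Esseen for the Rademacher sum with $\sigma=\rho=1$, followed by the rescaling $\Phi_1(z/\sqrt n)=\Phi_n(z)$, is what is implicitly being invoked.

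Your two corrections are also right and worth keeping:
\begin{itemize}
\item The $x$ in the display must be $n$.
\item The first sentence cannot hold as a total variation bound, since that distance is $1$ (the shifted binomial lives on a finite lattice). Only the Kolmogorov form is true.
\end{itemize}
The Kolmogorov form, including your left-limit version for strict inequalities, is exactly what \Cref{claim:protocol_line8} needs. There the interval $\mathcal{I}=[p,q)$ is half-open, so the relevant quantity is really $\Pr(B<q)-\Pr(B<p)$.
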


\begin{fact}
For any integer $n \geq 1$ and any $0 \leq k \leq n$, we have
$\binom{n}{k} \leq \binom{n}{\lfloor n/2 \rfloor}
\leq \sqrt{\frac{2}{\pi}} \cdot \frac{2^n}{\sqrt{n}}.$
\label{fact:binom_max_bound}
\end{fact}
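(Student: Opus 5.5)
The plan is to prove the two inequalities separately: the first by unimodality of the binomial coefficients, the second by a Wallis-type monotonicity argument for the central binomial coefficient, treating even and odd $n$ separately.

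\textbf{First inequality.} For $0\le k<n$ we have
\[
\binom{n}{k+1}\Big/\binom{n}{k}=\frac{n-k}{k+1},
\]
which is at least $1$ exactly when $k\le (n-1)/2$. So the sequence $\binom{n}{k}$ is non-decreasing up to $k=\lfloor n/2\rfloor$ and non-increasing afterwards. (For odd $n$, $\binom{n}{\lfloor n/2\rfloor}=\binom{n}{\lceil n/2\rceil}$ are the two tied maxima.) This gives $\binom{n}{k}\le\binom{n}{\lfloor n/2\rfloor}$.

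\textbf{Even case.} Write $n=2m$ and set $c_m:=\binom{2m}{m}4^{-m}=\prod_{j=1}^m\frac{2j-1}{2j}$. The target inequality is exactly $c_m\le 1/\sqrt{\pi m}$, since $\sqrt{2/\pi}\,2^{2m}/\sqrt{2m}=4^m/\sqrt{\pi m}$. Consider $u_m:=m\,c_m^2$. Its successive ratio is
\[
\frac{u_{m+1}}{u_m}=\frac{m+1}{m}\cdot\Big(\frac{2m+1}{2m+2}\Big)^2=\frac{(2m+1)^2}{4m(m+1)}>1,
\]
so $u_m$ is strictly increasing. By Wallis' product (equivalently, Stirling's formula), $u_m\to 1/\pi$. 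Hence $u_m\le 1/\pi$ for all $m\ge1$, which is the claim. The main subtlety is only to justify this limit; I would cite Wallis' formula $\prod_{j\ge1}\frac{4j^2}{4j^2-1}=\pi/2$, which rearranges to exactly $\lim_m m\,c_m^2=1/\pi$.

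\textbf{Odd case.} Write $n=2m+1$. The case $m=0$ is immediate, since $1\le 2\sqrt{2/\pi}$. For $m\ge1$ we have $\binom{2m+1}{m}=\binom{2m}{m}\frac{2m+1}{m+1}$, so
\[
\binom{2m+1}{m}2^{-(2m+1)}=c_m\cdot\frac{2m+1}{2m+2}\le\frac{2m+1}{(2m+2)\sqrt{\pi m}}.
\]
It then suffices that this is at most $\sqrt{2}/\sqrt{\pi(2m+1)}$. After squaring and clearing denominators, this becomes $(2m+1)^3\le 2m(2m+2)^2$. Expanding, this reads $8m^3+12m^2+6m+1\le 8m^3+16m^2+8m$, i.e.\ $0\le 4m^2+2m-1$, which holds for every $m\ge1$. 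This completes the plan.
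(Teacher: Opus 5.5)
The paper gives no proof of this statement. It is listed among the standard facts in the preliminaries, and it is only ever used through the consequence $\binom{n}{\lfloor n/2\rfloor}2^{-n}=O(1/\sqrt{n})$. This happens in Corollary~\ref{cor: shifted binomial single element}, in Claim~\ref{claim:protocol_line3}, and implicitly in the proof of Theorem~\ref{thm: loglog n}.

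Your argument is a correct, self-contained proof.
\begin{itemize}
\item The ratio $\frac{n-k}{k+1}$ gives unimodality with the maximum at $\lfloor n/2\rfloor$.
\item The identity $c_m=\prod_{j\le m}\frac{2j-1}{2j}$ and the ratio $u_{m+1}/u_m=\frac{(2m+1)^2}{4m(m+1)}>1$ are right.
\item Wallis' partial product equals $\frac{1}{(2m+1)c_m^2}$, so it does rearrange to $u_m\to 1/\pi$. Monotonicity then gives $u_m<1/\pi$ for every $m\ge 1$.
\item The odd case reduces correctly to $4m^2+2m-1\ge 0$.
\end{itemize}

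What your route buys over simply citing the fact is the sharp constant $\sqrt{2/\pi}$ for every $n$, not just asymptotically. That sharp constant is the only place where analytic input such as Wallis or Stirling is genuinely needed.

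If you only want what the paper actually uses, an elementary argument suffices. The comparison $\frac{2j-1}{2j}\le\frac{2j}{2j+1}$ gives $c_m^2\le\prod_{j\le m}\frac{2j-1}{2j+1}=\frac{1}{2m+1}$. Hence $\binom{n}{n/2}\le 2^n/\sqrt{n+1}$ for even $n$. Your odd-case reduction then gives $\binom{n}{\lfloor n/2\rfloor}\le 2^n/\sqrt{n}$ for odd $n$. This covers every $O(1/\sqrt{n})$ application in the paper.
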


\begin{corollary}\label{cor: shifted binomial single element}
Let $X \sim 2\cdot \mathcal{B}(n,1/2)-n$. Then, for any $z \in \mathbb{R}$, we have $
\Pr[X=z] \leq O\left(\frac{1}{\sqrt{n}}\right).
$
\end{corollary}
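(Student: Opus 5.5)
The corollary follows from Fact~\ref{fact:binom_max_bound}, with one small reduction first. Since $X = 2Y - n$ for $Y \sim \mathcal{B}(n,1/2)$, the map $y \mapsto 2y-n$ is a bijection from $\{0,1,\dots,n\}$ onto the support of $X$. Hence for any $z \in \mathbb{R}$ either $z$ lies outside the support, in which case $\Pr[X=z]=0$ and there is nothing to prove, or $z = 2k-n$ for a unique integer $0 \le k \le n$. In the latter case
\[
\Pr[X=z] = \Pr[Y=k] = \binom{n}{k} 2^{-n}.
\]

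Next I would apply Fact~\ref{fact:binom_max_bound} directly. It gives $\binom{n}{k} \le \binom{n}{\lfloor n/2\rfloor} \le \sqrt{2/\pi}\cdot 2^n/\sqrt{n}$, and multiplying by $2^{-n}$ yields
\[
\Pr[X=z] \le \sqrt{\frac{2}{\pi}} \cdot \frac{1}{\sqrt{n}} = O\!\left(\frac{1}{\sqrt{n}}\right),
\]
uniformly in $z$. This is exactly the claimed bound, with an explicit constant $\sqrt{2/\pi}$.

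There is no real obstacle here; the only care needed is the support bookkeeping, namely noting that $z$ need not be of the form $2k-n$ and handling that case as probability zero. As a sanity check, one could alternatively invoke Fact~\ref{fact:dist_couple}: the Gaussian $\mathcal{N}(0,n)$ assigns zero mass to any single point, so any atom of $X$ is bounded by the cdf discrepancy, which is again $O(1/\sqrt{n})$. The direct binomial route is cleaner and gives the explicit constant, so that is the one I would write.
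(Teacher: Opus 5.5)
Your proof is correct and is the argument the paper intends: the corollary is stated immediately after Fact~\ref{fact:binom_max_bound} with no separate proof, since $\Pr[X=z]=\binom{n}{k}2^{-n}\le\sqrt{2/\pi}/\sqrt{n}$ when $z=2k-n$ and the probability is zero otherwise. Your handling of the case where $z$ lies outside the support is the one detail the paper leaves implicit.
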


\begin{theorem}[Chernoff Bound]\label{thm: chernoff}
Let $n$ be a positive integer and $X \sim \cB(n,1/2)$. Then, for any $\alpha \in (0,1)$,
$$
\Pr\left(\left|X-\frac{n}{2}\right|\geq \frac{\alpha n}{2}\right)
\leq 2e^{-\alpha^2 n/6}.
$$
\end{theorem}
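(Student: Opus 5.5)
The statement to prove is the Chernoff bound for $X \sim \cB(n,1/2)$. I will use the standard moment generating function (Cramér–Chernoff) method, treating each tail separately and combining them with a union bound. Write $X = \sum_{i=1}^n X_i$ with the $X_i$ i.i.d.\ Bernoulli$(1/2)$, and set $\mu = \bE[X] = n/2$. Since $\alpha\in(0,1)$, both deviations $\mu \pm \alpha\mu$ lie in $[0,n]$, so both tails are non-trivial.

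For the upper tail, fix $t>0$ and apply Markov's inequality to $e^{tX}$:
\[
\Pr(X\geq (1+\alpha)\mu) \leq e^{-t(1+\alpha)\mu}\,\bE[e^{tX}].
\]
By independence,
\[
\bE[e^{tX}] = \prod_{i=1}^n \frac{1+e^t}{2} \leq \exp\bigl(\mu(e^t-1)\bigr),
\]
using $1+x\leq e^x$. Choosing $t=\ln(1+\alpha)$ gives the bound $\bigl(e^\alpha/(1+\alpha)^{1+\alpha}\bigr)^\mu$. The lower tail is symmetric: apply Markov to $e^{-tX}$ with $t=-\ln(1-\alpha)$, which gives $\bigl(e^{-\alpha}/(1-\alpha)^{1-\alpha}\bigr)^\mu$.

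Converting these expressions into the clean exponent $\alpha^2\mu/3$ is the only step requiring care. It rests on two elementary inequalities, valid for $\alpha\in(0,1)$:
\[
(1+\alpha)\ln(1+\alpha) - \alpha \geq \frac{\alpha^2}{3}
\qquad\text{and}\qquad
(1-\alpha)\ln(1-\alpha) + \alpha \geq \frac{\alpha^2}{2}.
\]
Both follow by checking that the difference vanishes at $\alpha=0$ and has a nonnegative derivative. For the first, the derivative is $\ln(1+\alpha) - 2\alpha/3$. This is concave, equals $0$ at $\alpha=0$, and is positive at $\alpha=1$ because $\ln 2 > 2/3$, so it is nonnegative on $[0,1]$. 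For the second, the derivative is $-\ln(1-\alpha)-\alpha \geq 0$. Hence each tail is at most $e^{-\alpha^2\mu/3} = e^{-\alpha^2 n/6}$.

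Finally, a union bound over the two tails yields the claimed $2e^{-\alpha^2 n/6}$.

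Alternatively, one could cite the bound directly as the standard multiplicative Chernoff bound, since the paper treats it as a standard fact.
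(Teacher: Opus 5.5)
Your proof is correct: the MGF bounds, the choices $t=\ln(1+\alpha)$ and $t=-\ln(1-\alpha)$, the two elementary inequalities (including the concavity argument using $\ln 2 > 2/3$), and the final union bound all check out. The paper gives no proof and simply states this as a standard fact, and your argument is the standard derivation of the multiplicative Chernoff bound.
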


\begin{theorem}[Anti-Concentration of the Binomial]\label{thm: anti-concentration}
Let $n$ a positive integer and $X \sim \cB(n,1/2)$. Then, for any $\alpha > 0$,
$$
\Pr\left(\left|X-\frac{n}{2}\right|\leq \alpha\sqrt{n}\right)
\leq \frac{2e\alpha}{\pi}.
$$
\end{theorem}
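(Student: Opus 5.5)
The plan is a direct union bound over the atoms of $X$ in the window, using the uniform bound on binomial point masses from Fact~\ref{fact:binom_max_bound}. By that fact, for every integer $k$,
\[
\Pr(X=k)=\binom{n}{k}2^{-n}\le \sqrt{\tfrac{2}{\pi}}\cdot\frac{1}{\sqrt n}.
\]
The event $|X-n/2|\le \alpha\sqrt n$ means $X$ lies in a closed interval of length $2\alpha\sqrt n$. Such an interval contains at most $\lfloor 2\alpha\sqrt n\rfloor+1\le 2\alpha\sqrt n+1$ integers. Summing the point-mass bound over these integers gives
\[
\Pr\left(\left|X-\tfrac n2\right|\le\alpha\sqrt n\right)\le \left(2\alpha\sqrt n+1\right)\sqrt{\tfrac{2}{\pi}}\frac{1}{\sqrt n}
= 2\sqrt{\tfrac{2}{\pi}}\,\alpha+\sqrt{\tfrac{2}{\pi n}}.
\]

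The remaining step is to absorb the additive term $\sqrt{2/(\pi n)}$ into the slack between the constants. Numerically, $2\sqrt{2/\pi}\approx 1.596$ while $2e/\pi\approx 1.731$. So it suffices that $\sqrt{2/(\pi n)}\le (2e/\pi-2\sqrt{2/\pi})\alpha\approx 0.134\,\alpha$. Rearranging, this holds as soon as $\alpha\sqrt n\ge \sqrt{2/\pi}/0.134\approx 6$.

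The main obstacle is the regime of very small window, $\alpha\sqrt n<6$, and here some care is needed. If $\alpha\to 0$ with $n$ even, the left-hand side tends to $\Pr(X=n/2)\approx\sqrt{2/(\pi n)}>0$, while the right-hand side $2e\alpha/\pi$ tends to $0$. So the inequality cannot hold literally for all $\alpha>0$. It needs a mild side condition such as $\alpha\ge c/\sqrt n$ for a suitable constant $c$, or, equivalently for applications, an additive $O(1/\sqrt n)$ error term on the right.

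I would therefore state and prove the bound under the hypothesis $\alpha\sqrt n\ge 6$. The threshold can be pushed down somewhat by also using that the atoms $\binom nk 2^{-n}$ decrease away from the centre, so the non-central integers contribute strictly less than the maximum. I would also remark that the downstream uses (e.g.\ Lemma~\ref{lem: majority lines up with GT}, with constant $\alpha$ and $n\to\infty$) only require this regime, or tolerate the extra $O(1/\sqrt n)$ term, which Corollary~\ref{cor: shifted binomial single element} already controls.
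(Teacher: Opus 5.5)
The paper gives no proof of this statement; it is quoted among the standard inequalities in the preliminaries, so there is no argument of the paper's to compare with. Your main point is correct: as written, the inequality is false. For instance, with $n=2$ and $\alpha=0.1$ the window $|X-1|\le 0.1\sqrt 2$ contains only $X=1$, so the left side is $1/2$ while $2e\alpha/\pi\approx 0.17$. For odd $n$, take $n=1$ and $\alpha=1/2$: the left side is $1>e/\pi$. Your proof of the repaired version is also correct. The closed window contains at most $\lfloor 2\alpha\sqrt n\rfloor+1$ integers, and each atom is at most $\sqrt{2/(\pi n)}$ by Fact~\ref{fact:binom_max_bound}. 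This gives the bound $2\sqrt{2/\pi}\,\alpha+\sqrt{2/(\pi n)}$. The slack $2e/\pi-2\sqrt{2/\pi}\approx 0.135$ absorbs the additive term once $\alpha\sqrt n\ge 6$.

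Two of your side remarks need correcting.

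First, the decay of the atoms does not let you lower the threshold uniformly in $n$. For bounded offsets $j$ one has $\binom{n}{n/2+j}2^{-n}=(1-o(1))\sqrt{2/(\pi n)}$. So for large even $n$ and $\alpha\sqrt n=5$, the left side is about $11\sqrt{2/(\pi n)}\approx 8.78/\sqrt n$, which exceeds $2e\alpha/\pi\approx 8.65/\sqrt n$. Your constant $6$ is therefore essentially sharp; even counting the atoms exactly only brings it down to about $5.07$.

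Second, the paper's only use of this bound, in Lemma~\ref{lem: majority lines up with GT}, is not in a constant-$\alpha$ regime. There $\alpha=\sqrt{3b\ln(4/\delta)/(2a)}\le\sqrt{0.03}\,\delta$, and $\alpha\sqrt a=\sqrt{1.5\,b\ln(4/\delta)}$ is only guaranteed to exceed about $1.44$. So your hypothesis $\alpha\sqrt n\ge 6$ can fail there. The paper's displayed intermediate bound can in fact be false, e.g.\ for $b=1$, $\alpha\sqrt a=2$ and $a$ large and even. Your additive form repairs this. Since $a\ge 50\ln 4/\delta^2$, you get $2\sqrt{2/\pi}\,\alpha+\sqrt{2/(\pi a)}\le 0.28\,\delta+0.10\,\delta<\delta/2$, so the lemma's conclusion survives.
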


\section{The Key Technical Lemma}

We now present the key lemma that will be used to prove both our lower and upper bounds.

\begin{lemma}\label{lem: majority lines up with GT}

    Let $\delta \in (0,1)$.
    If $b \leq 0.02 \cdot a\delta^2/\ln(4/\delta)$, then
    \[
        \Pr(\sign(C_{AB}) = \sign(C_A)) \geq 1 - \delta
    \]
    Conversely, if $a \leq 0.02 \cdot b\delta^2/\ln(4/\delta)$, then
    \[
        \Pr(\sign(C_{AB}) = \sign(C_B)) \geq 1 - \delta.
    \]
\end{lemma}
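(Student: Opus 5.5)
By symmetry it suffices to prove the first statement, where $b \leq 0.02\, a\delta^2/\ln(4/\delta)$. The second follows by swapping the roles of Alice and Bob. The plan is to exhibit a threshold $t$ such that, with probability at least $1-\delta$, both $|A| > t$ and $|B| \leq t$ hold. On that event $|A| \geq |B|$, and Fact~\ref{fact: abs val vs sign} gives $\sign(A+B) = \sign(A)$, that is, $\sign(C_{AB}) = \sign(C_A)$. By a union bound it is enough to make each of the two bad events have probability at most $\delta/2$.

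\textbf{Step 1 (anti-concentration for Alice).} Write $A = 2X - a$ with $X \sim \cB(a,1/2)$, so that $|A| \leq t$ is equivalent to $|X - a/2| \leq t/2$. Take $t = \alpha \cdot 2\sqrt{a}$, so this event becomes $|X - a/2| \leq \alpha\sqrt{a}$. Theorem~\ref{thm: anti-concentration} bounds its probability by $2e\alpha/\pi$. Choosing $\alpha = \pi\delta/(4e)$ makes this at most $\delta/2$, which gives $t = \pi\delta\sqrt{a}/(2e)$, roughly $0.578\,\delta\sqrt{a}$.

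\textbf{Step 2 (concentration for Bob).} Write $B = 2Y - b$ with $Y \sim \cB(b,1/2)$. The event $|B| > t$ is equivalent to $|Y - b/2| > t/2 = \beta b/2$ with $\beta = t/b$. If $\beta \geq 1$, the event is impossible because $|B| \leq b$. Otherwise Theorem~\ref{thm: chernoff} bounds its probability by $2\exp(-t^2/(6b))$. We need $t^2/(6b) \geq \ln(4/\delta)$, i.e.\ $b \leq t^2/(6\ln(4/\delta)) = (\pi^2/(24e^2))\, a\delta^2/\ln(4/\delta)$. Since $\pi^2/(24e^2) \approx 0.0557 > 0.02$, the hypothesis suffices. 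The edge case $\beta \geq 1$ is handled as noted, and if the strict versus non-strict inequality in the Chernoff statement causes trouble, shrinking $t$ slightly costs nothing.

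There is no serious obstacle. The only care needed is the constant bookkeeping: matching the anti-concentration scale $\delta\sqrt{a}$ against the Chernoff scale $\sqrt{b\ln(4/\delta)}$, and checking that the constant $0.02$ leaves enough slack. The computation above shows it does, by a factor of about $2.8$.
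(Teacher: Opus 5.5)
Your proof is correct and follows the paper's argument: a common threshold, a Chernoff bound on Bob's sum, anti-concentration for Alice's sum, a union bound, and then Fact~\ref{fact: abs val vs sign}. The only difference is cosmetic. The paper fixes the threshold at Bob's concentration scale $\sqrt{6b\ln(4/\delta)}$ and uses the hypothesis on $b$ in the anti-concentration step. You fix it at Alice's anti-concentration scale $\pi\delta\sqrt{a}/(2e)$ and use the hypothesis in the Chernoff step, and you also handle the $\beta\ge 1$ edge case explicitly, which the paper leaves implicit.
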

\begin{proof}
    We will prove the first of these, the second follows by an identical argument. We first show that $C_B$ is concentrated around 0. To do so, we use a Chernoff bound (\Cref{thm: chernoff}) to write
    \begin{equation}\label{eq: Y concentration}
        \Pr\left(\left\vert C_B \right\vert \geq \sqrt{6b\ln(4/\delta)}\right) \leq \delta/2.
    \end{equation}

    Next, we turn to $C_A$ and instead show that it is unlikely to be this close to $0$.
    We do this via Theorem \ref{thm: anti-concentration} to write
    \begin{align}
        \Pr\left(\left\vert C_A \right\vert \leq \sqrt{6b\ln(4/\delta)}\right)
                &\leq \frac{2e}{\pi}\sqrt{\frac{3b\ln(4/\delta)}{2a}} \nonumber \\
                & \leq \frac{2e}{\pi}\sqrt{0.03\cdot \delta^2}{\tag{By the assumption on $b$.}}\nonumber\\
                &< \delta / 2.\label{eq: X anti concentration}
    \end{align}

    Thus, combining Equations (\ref{eq: Y concentration}) and (\ref{eq: X anti concentration}) via a union bound, we can write
    \[
        \Pr\left(\left\vert C_B\right\vert \geq \sqrt{6b \ln(4/\delta)} \quad\text{ or }\quad \left\vert C_A \right\vert \leq  \sqrt{6b \ln(4/\delta)} \right) \leq  \delta,
    \]
    and so, with probability at least $1-\delta$, it follows that $\vert C_A \vert > \vert C_B \vert$. 
    Thus, by \Cref{fact: abs val vs sign}, we also get that $\sign(C_{AB}) = \sign(C_A)$ with this probability.
\end{proof}

\section{The \texorpdfstring{$\Omega(\log\log n)$}{Omega(loglog n)} Lower Bound}
\label{sec:loglog_lb}

We now prove our main result, Theorem \ref{thm: loglog n}.

\LOGLOG*

\subsection{The \tGGT Communication Problem}

Our starting point is the canonical \tGGT communication problem, and its well-known hardness.

\begin{definition}[Gap-Greater-Than]
    In the $\GGT(m, g)$ problem, Alice and Bob are given $G_A,G_B\in[m]$, with the promise that $\vert G_A - G_B \vert \geq g$ and asked to determine if $G_A > G_B$.
\end{definition}

\begin{restatable}{theorem}{GTHARD}\label{thm: gap-gt hardness}
    Any (potentially randomised) one-way two-player communication protocol which solves $\GGT(m, \sqrt{m})$ with probability at least $1/2 + \varepsilon$, for any constant $\varepsilon > 0$, must send a message of size $\Omega(\log m)$.
\end{restatable}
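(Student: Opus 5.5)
We prove \Cref{thm: gap-gt hardness} by reducing from the \tAUGINDEX problem, whose one-way randomised complexity with constant advantage is well known to be linear in the input length. Alice holds $x \in \{0,1\}^k$. Bob holds an index $i \in [k]$ together with the prefix $x_1,\dots,x_{i-1}$, and must output $x_i$. Any one-way protocol succeeding with probability $1/2+\varepsilon$ must send $\Omega(k)$ bits; this is the standard $(1-H(1/2+\varepsilon))k$ bound via the chain rule for mutual information. We will choose $k = \Theta(\log m)$, so that this $\Omega(k)$ bound becomes $\Omega(\log m)$.

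The encoding is the usual binary-tree embedding of \tGT, modified so that every branching decision produces an additive gap of at least $\sqrt{m}$. Set $s = \lceil \sqrt{m}\,\rceil$ and let $k = \lfloor \log_2(m/(4s)) \rfloor = \Theta(\log m)$. Alice maps $x$ to
\[
G_A = s + 4s\sum_{j=1}^{k} x_j 2^{k-j}.
\]
This value is a multiple of $4s$ shifted by $s$, and it lies in $[m]$ because $4s \cdot 2^k \le m$. Bob knows $y_j = x_j$ for $j<i$, and he sets
\[
G_B = 4s\sum_{j<i} x_j 2^{k-j} + 4s\cdot 2^{k-i} - s .
\]
That is, he takes the prefix and places a ``1'' at position $i$, with all later bits equal to $0$, and then shifts down by $s$. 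Now compare the two values. If $x_i=1$, then $G_A \ge 4s\sum_{j<i}x_j2^{k-j} + 4s2^{k-i} + s$, so $G_A - G_B \ge 2s$. If $x_i = 0$, then $G_A \le 4s\sum_{j<i}x_j2^{k-j} + 4s(2^{k-i}-1) + s$, so $G_B - G_A \ge 2s - \dots$. Checking this more carefully: the bits after position $i$ contribute at most $4s(2^{k-i}-1)$, so $G_A \le 4s\sum_{j<i}x_j2^{k-j} + 4s\cdot 2^{k-i} - 3s$, which gives $G_B - G_A \ge 2s$. In both cases $|G_A-G_B|\ge 2s \ge \sqrt m$, so the \GGT promise holds, and $[G_A>G_B] = x_i$. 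Both parties can also verify $G_B\ge 1$, which follows from $4s\cdot 2^{k-i}\ge 4s$.

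Given any one-way protocol for $\GGT(m,\sqrt m)$ with success probability $1/2+\varepsilon$ and message length $c$, Alice and Bob run it on $(G_A,G_B)$ and obtain an \tAUGINDEX protocol on $k$ bits with the same success probability and the same message length $c$. Randomness is handled by the shared coins of the original protocol, and the simulation is correct for every input, so distributional or worst-case success carries over. It follows that $c = \Omega(k) = \Omega(\log m)$.

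The main point requiring care is the arithmetic of the gap. The spacing $4s$ together with the offsets $\pm s$ must guarantee a gap of at least $\sqrt m$ in both cases while keeping all values inside $[m]$. The choice of $k$ handles the range constraint, and it costs only an additive $O(1)$ in $\log m$ because $\log(m/\sqrt m) = \tfrac12\log m$. If one prefers not to rely on \tAUGINDEX directly, a fallback is to cite the known $\Omega(\log m)$ one-way bound for \tGT, since the construction above in fact shows that the gap version retains the hardness of \tGT on $m/(4\sqrt m)$ values.
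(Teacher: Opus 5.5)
Your proof is correct and follows essentially the paper's route: a reduction from \textsf{Augmented-Index} on about $\frac12\log m$ bits, where $x$ is embedded as the high-order part of a number and the low-order part is padded to create the gap. Your symmetric offsets ($+s$ for Alice, $-s$ for Bob, with spacing $4s$) make $|G_A-G_B|\ge 2s$ hold for every input. This is cleaner than the paper's encoding, which pads only Alice's low half. There, when $X[J]=0$ and her later high-order bits are all ones, the gap $B-A$ can be as small as $1$, so the promise holds only up to a probabilistic loss over $J$.
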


This one-way lower bound can be proven via a simple reduction from the canonical \tAUGINDEX problem. For completeness, we give this reduction in Appendix \ref{app: GT hardness}, but do not claim any novelty in it.
However, the following fact regarding it will be useful here.

\begin{fact}\label{fact: ggt hard dist fact}
    With the choice of $m = \log{n}$, the hard distribution of $\GGT(m, \sqrt{m})$ which we implicitly use satisfies $\Pr(G_A \geq \sqrt{\log n}) = 1 - 2^{\log(m)/2} = 1 - o(1)$.
\end{fact}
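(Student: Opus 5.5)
The plan is to read the claim off the explicit hard distribution built in the reduction of Appendix~\ref{app: GT hardness}. As printed, the quantity $1-2^{\log(m)/2}$ is negative, so I read it as $1-2^{-\log(m)/2} = 1-1/\sqrt{m}$. I will prove that $\Pr(G_A<\sqrt{m})\le 2^{-\log(m)/2}$, which is this reading.

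First I fix the form of the reduction from \tAUGINDEX that I expect the appendix to use. Alice holds a uniformly random string $x\in\{0,1\}^k$ with $k=\tfrac12\log m$. Bob holds an index $i\in[k]$ together with the prefix $x_1,\dots,x_{i-1}$. To force the additive gap $\sqrt m$, both players work on a coarse grid scaled by $\sqrt m = 2^k$:
\[
G_A = \sqrt m\cdot \mathrm{val}(x) + c,
\]
where $\mathrm{val}(x)$ is the integer whose binary expansion is $x$ (with $x_1$ the most significant bit) and $c$ is a fixed offset placing $G_A$ in $[m]$. Bob sets $G_B=\sqrt m\cdot(\text{value of }x_1\cdots x_{i-1}\,1\,0\cdots0)+c'$ for a suitable half-step offset $c'$. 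Then $|G_A-G_B|\ge\sqrt m$, and $G_A>G_B$ holds exactly according to the bit $x_i$. Because $\mathrm{val}(x)<2^k=\sqrt m$, we have $G_A\le m$, so the promise and the range are both respected. If the appendix places the scaling or offsets slightly differently, the argument below is unchanged up to constants.

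Next I compute the probability. In this construction $G_A<\sqrt m$ can happen only when $\mathrm{val}(x)=0$, that is, when $x$ is the all-zeros string. Since $x$ is uniform on $\{0,1\}^k$, this has probability $2^{-k}=2^{-\log(m)/2}=1/\sqrt m$. Hence $\Pr(G_A\ge\sqrt m)=1-1/\sqrt m$.

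Finally I substitute $m=\log n$. This gives $\sqrt m=\sqrt{\log n}$ and $\Pr(G_A\ge\sqrt{\log n}) = 1-1/\sqrt{\log n}=1-o(1)$.

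The main obstacle is consistency with the exact construction in the appendix. I would check two points there. First, the offset $c$ must be at most $\sqrt m$, so that the only bad event really is $x=0$. Second, the appendix's string length must be $\tfrac12\log m$ rather than, say, $\tfrac12\log m-1$. A different length would only change the $o(1)$ term by a constant factor.
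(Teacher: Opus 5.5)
Your proof is correct and is exactly the implicit justification the paper relies on. In Appendix~\ref{app: GT hardness}, Alice's number has its top $\tfrac12\log m$ bits equal to the uniform string $X$ and all remaining bits set to $1$, i.e.\ $G_A=\sqrt m\cdot\mathrm{val}(X)+(\sqrt m-1)$, so $G_A<\sqrt m$ precisely when $X$ is all zeros; this has probability $2^{-\log(m)/2}=1/\sqrt{\log n}$, and your reading of the printed $1-2^{\log(m)/2}$ as a sign typo for $1-2^{-\log(m)/2}$ is right (for this fact only the marginal of $G_A$ matters, so your guess about Bob's offset is immaterial, and in the paper it is $c'=0$).
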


We also note that \emph{multi-round} lower bounds can be obtained from more direct round elimination arguments \cite{mnsw95}.

\subsection{The Reduction from \tGGT}

We now present our reduction to the coin problem.
We point out that our earlier key observation (Lemma \ref{lem: majority lines up with GT}) directly motivates the design of our reduction: we will aim to create input streams of coins where one player holds far more than the other, so that solving the coin problem will also help reveal who holds more coins.

\begin{tcolorbox}[standard jigsaw,opacityback=0,width=0.95\textwidth,breakable]
  \textbf{Protocol $\cQ$ for $\GGT(m, \sqrt{m})$:}\label{protocol: Q}

  \paragraph*{Input:}
  Suppose Alice and Bob hold an instance of $\GGT(m, \sqrt{m})$ , with Alice holding $G_A\in[m]$, and Bob holding $G_B\in[m]$.
  Let $\cA$ denote any protocol for the \tCoin problem.

  \paragraph*{The Protocol:}
  \begin{enumerate}
      \item First, Alice uses private randomness to simulate $2^{G_A}$ many i.i.d. coin tosses, denoting these $C_A$.
      \item Alice then runs $\cA$ on these simulated coins, and sends the message from $\cA$ on to Bob, together with the majority element of $C_A$ (if it exists), denoted $X$.
      \item Bob then continues by first using private randomness to simulate $2^{G_B}$ many i.i.d. coins, denoted $C_B$, and then uses the message sent by Alice to continue the execution of $\cA$ on these coins.
      Also, let $Y$ denote the majority element of $C_B$.
  \end{enumerate}

  \paragraph*{The Output:}
  Bob then produces the final output in the following way.
  If either $X$ does not exist, or $X = Y$, then Bob outputs either $G_A > G_B$ or $G_B > G_A$, each with probability $1/2$ (i.e. a random guess).

  Otherwise, we have that $X$ exists and $X \neq Y$.
  Now, in this case, let $Z$ denote the final output produced by $\cA$.
  Then, if $Z = Y$, Bob outputs that $G_B > G_A$, and otherwise outputs that $G_A > G_B$.
\end{tcolorbox}

We are now ready to complete the proof of Theorem \ref{thm: loglog n}.

\LOGLOG*

\begin{proof}
    For any $n$, we consider an instance of $\GGT(\log n, \sqrt{\log n})$ (i.e. $m = \log n$), and consider running Protocol $\cQ$.
    In doing so, suppose that $\cA$ is any protocol for the \tCoin problem with success probability $3/4 + \varepsilon$.

    We now first consider the event
    \[
        \cE = \text{ ``either $X$ does not exist, or $X = Y$" }.
    \]
    This is precisely the event under which Bob's outputs is simply a random guess, which will be correct with probability $1/2$.
    Then, we can write
    \begin{align}
        \Pr&(\cE)\\ 
                 &\leq \Pr\left(\text{$X$ does not exist}\right) + \Pr(X = Y)\nonumber\\
                 &= \Pr\left(\text{$C_A$ contains exactly $a/2$ many heads}\right) + \Pr(X = Y)\nonumber\\
                 &\leq \Pr\left(\text{$C_A$ contains exactly $a/2$ many heads} \mid G_A \geq \sqrt{\log n}\right)\cdot\Pr\left(G_A \geq \sqrt{\log n}\right)\\
                 &\quad\quad+ \Pr\left(G_A < \sqrt{\log n}\right) + \Pr(X = Y)\nonumber\\
                 &= O\left(1/2^{\sqrt{\log n}/2}\right)\cdot\Pr\left(G_A \geq \sqrt{\log n}\right) + \Pr\left(G_A < \sqrt{\log n}\right) + \Pr(X = Y)\tag{$\Pr\left(\text{$C_A$ contains exactly $a/2$ many heads}\right) = \binom{a}{a/2}/2^a = O\left(1/\sqrt{a}\right)$, and $a = 2^{G_A} \geq 2^{\sqrt{\log n}}$.}\nonumber\\
                 &= O\left(1/2^{\sqrt{\log n}/2}\right)\cdot\left(1 - o(1)\right) + o(1) + 1/2\tag{By Facts \ref{obs: event when majority not useful} and \ref{fact: ggt hard dist fact}.}\nonumber\\
                 &= o(1)\cdot(1-o(1)) + o(1) + 1/2\nonumber\\
                 &\leq 1/2 + o(1).\label{eq: Pr cE}
    \end{align}

    Next, we turn to consider the event
    \begin{align*}
        \cF =\quad &\text{``The majority element of the player with the larger input}\\
                   &\text{~~is also the majority element of all coins collectively"}.
    \end{align*}
    We immediately see that this is precisely the event whose probability is bounded in our key lemma (Lemma \ref{lem: majority lines up with GT}), which is exactly how we will approach it here.

    In order to bound $\Pr(\cF)$, we first use that, by construction, $\vert G_A - G_B \vert \geq \sqrt{\log n}$.
    Therefore, assuming that $G_A > G_B$, we get that
    \begin{equation}\label{eq: 2^G_A vs 2^G_B}
        \frac{2^{G_A}}{2^{G_B}} \geq \frac{2^{G_A}}{2^{G_A - \sqrt{\log n}}} = 2^{\sqrt{\log n}},
    \end{equation}
    and vice-versa if $G_B > G_A$.
    Now, to satisfy the conditions of Lemma \ref{lem: majority lines up with GT}, we set
    \[
        \delta = \sqrt[4]{\frac{50}{2^{\sqrt{\log n}}}} = o(1),
    \]
    to obtain
    \begin{equation}\label{eq: delta^2/ln(2/delta)}
        \frac{\delta^2}{\ln(4/\delta)} \geq \delta^3 > \delta^4 = \frac{50}{2^{\sqrt{\log n}}},
    \end{equation}
    and thus verify
    \begin{align*}
        2^{G_B} &\leq 2^{G_A}/2^{\sqrt{\log n}}\tag{Equation \ref{eq: 2^G_A vs 2^G_B}.}\\
                &= 2^{G_A}\cdot\frac{1}{50}\cdot\frac{50}{2^{\sqrt{\log n}}}\\
                &\leq 2^{G_A}\cdot 0.02\cdot\delta^2/\ln(4/\delta),\tag{Equation \ref{eq: delta^2/ln(2/delta)}.}
    \end{align*}
    as desired (or vice-versa if $G_B > G_A$).
    This then allows us to use Lemma \ref{lem: majority lines up with GT} to obtain the bound
    \begin{equation}
        \Pr(\cF) \geq 1 - \delta = 1 - o(1).\label{eq: Pr cF}
    \end{equation}

    and thus
    \begin{align*}
        \Pr(\text{$\cA$ is correct} \wedge \cF) &\geq \Pr(\text{$\cA$ is correct}) - \Pr(\neg\cF)\\
                &\geq \frac{3}{4} + \varepsilon - o(1).
    \end{align*}

    Now, using these bounds, we can determine the probability that $Q$ is correct to be:
    \begin{align*}
        \Pr(\text{Protocol $\cQ$ is correct}) &\geq \frac{1}{2}\cdot\Pr(\cE) + \Pr(\neg\cE \wedge (\text{$\cA$ is correct} \wedge \cF))\\
                &\geq \frac{1}{2}\cdot\Pr(\cE) + \Pr(\text{$\cA$ is correct} \wedge \cF) - \Pr(\cE)\\
                &\geq \frac{1}{2}\cdot\left(\frac{1}{2}+o(1)\right) + \left(\frac{3}{4} + \epsilon - o(1)\right) - \left(\frac{1}{2} + o(1)\right)\tag{Equations \ref{eq: Pr cE} and \ref{eq: Pr cF}.}\\
                &\geq \frac{1}{2} + \varepsilon - o(1)\\
                &\geq \frac{1}{2} + \varepsilon / 2,
    \end{align*}
    which is a constant larger than $1/2$.

    Then, since (by Theorem \ref{thm: gap-gt hardness}) solving $\GGT(m, \sqrt{m})$ with this probability requires a message of size
    \[
        \Omega(\log m) = \Omega(\log\log n),
    \]
    it must be that $\cQ$, and hence $\cA$, sends a message of this size.

    Now, as the final step, we verify that the total number of coins constructed is at most
    \[
        2^{\log n} + 2^{\log n} \leq 2\cdot n
    \]
    completing the proof.
\end{proof}

\section{The \texorpdfstring{$\Omega(\log(1/\delta))$}{Omega(log(1/delta))} Lower Bound}

We now prove \Cref{thm: log(1/delta)}.

\LOGDELTA*

We note that it suffices to only consider the range of $\delta\geq n^{-0.1}$, since this regime of $\delta$ already implies the optimal $\Omega(\log n)$ lower bound. Additionally, we may also assume $\delta<1/220$, as larger error probabilities only yield the trivial $\Omega(1)$ lower bound.

We now set up the proof of Theorem \ref{thm: log(1/delta)}. By Yao's Lemma, it suffices to consider an arbitrary deterministic protocol $\cP$ that uses $C$ bits of communication and errs with probability $\delta$ under the specified input distribution. We assume that both Alice and Bob receive exactly $n$ coins, and let $A,B\sim S(n)$ denote the respective differences between the number of heads and tails they hold, where $S(n) = 2\cdot \mathcal{B}(n,1/2)-n$ denotes the shifted binomial distribution. 

We make two simplifications that significantly ease the exposition. First, instead of computing $\sign(A+B)$, we see that the players only need to decide whether $A>-B$ or $A<-B$. Noting that $-B$ and $B$ follow identical distributions, they can instead try to decide whether $A > B$ or $B < A$. 

Second, we assume that Alice's message $M$ depends only on $A$, i.e., Alice's message is governed by a function $M:[-n,n]\to\{0,1\}^C$. In the original setting, this may not hold; for example, Alice may send different messages when her two coins are $(-1,+1)$ and $(+1,-1)$, despite having the same value of $A=0$. Nevertheless, this assumption does not affect the validity of our argument, but simplifies the notation greatly. As a result, Alice's input space is partitioned into $2^C$ subsets corresponding to each message, namely $I_1,\ldots,I_{2^C}$. By our assumption, we may view them as a partition of $[-n,n]$, where each $I_j$ corresponds to the sum of Alice's coins in its pre-image. Let $p(I_1),\ldots,p(I_{2^C})$ denote the probability masses of these subsets respectively. For
any $z\in[-n,n]$ and $I\subseteq[-n,n]$, we shorthand
$p(z)=\Pr_{X\sim S(n)}(X=z)$ and
$p(I)=\sum_{z\in I}p(z)$.

We now suppose, for a contradiction, that $C\leq 0.1\cdot\log(1/\delta)$, which implies that $2^C\leq (1/\delta)^{0.1}$. Denote the collection of subsets, as induced by Alice's message, with high masses by:
$$
\cI^{\mathrm{high}} = \left\{I_j \mid j \in [2^C] \text{ and } p(I_j)\geq 0.1\cdot 2^{-C}\right\}.
$$
We first show that $\cI^{\mathrm{high}}$ contains a significant fraction of the probability mass.

\begin{claim}\label{clm:bound_Ihigh}
$p(\cI^{\mathrm{high}})\geq 0.9$.
\end{claim}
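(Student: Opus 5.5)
We bound the complementary mass directly. Write $\cI^{\mathrm{low}} = \{I_j \mid j\in[2^C],\ p(I_j) < 0.1\cdot 2^{-C}\}$ for the remaining parts of the partition. Since $I_1,\ldots,I_{2^C}$ partition $[-n,n]$ and $p$ is a probability measure on $[-n,n]$ (the law of $S(n)$), we have
\[
p(\cI^{\mathrm{high}}) = 1 - \sum_{I_j\in\cI^{\mathrm{low}}} p(I_j).
\]
So it suffices to show that the low-mass parts together carry mass at most $0.1$.

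This follows from a counting bound. There are at most $2^C$ indices $j$ in total, so $\cI^{\mathrm{low}}$ contains at most $2^C$ subsets, each of mass strictly less than $0.1\cdot 2^{-C}$. Hence
\[
\sum_{I_j\in\cI^{\mathrm{low}}} p(I_j) < 2^C\cdot 0.1\cdot 2^{-C} = 0.1,
\]
which gives $p(\cI^{\mathrm{high}})\geq 0.9$. Here $p(\cI^{\mathrm{high}})$ is read as the total mass $\sum_{I\in\cI^{\mathrm{high}}}p(I)$ of the union of the high-mass parts.

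No real obstacle arises. The only points to check are that the $I_j$ are disjoint and cover $[-n,n]$, which holds because they are preimages of the message function $M$, and that empty preimages are harmless, since they have mass $0$ and fall into $\cI^{\mathrm{low}}$. The assumption $C\le 0.1\log(1/\delta)$ is not needed for this claim; it will only be used later in the argument.
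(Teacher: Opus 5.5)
Your proof is correct and is essentially the paper's argument. The paper draws $J$ with $\Pr[J=j]=p(I_j)$, notes $\bE_J[1/p(I_J)]\le 2^C$, and applies Markov's inequality to get $\Pr_J[p(I_J)<0.1\cdot 2^{-C}]\le 0.1$; this is your counting bound $\sum_{I_j\in\cI^{\mathrm{low}}}p(I_j)<2^C\cdot 0.1\cdot 2^{-C}$ written in probabilistic form, and your direct version handles zero-mass preimages more explicitly.
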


\begin{proof}
Consider the random variable $J\in[2^C]$ sampled according to the distribution
$\Pr[J=j]=p(I_j)$. Then,
$$ \bE_J\left[\frac{1}{p(I_J)}\right] = \sum_{j\in[2^C]}p(I_j)\cdot\frac{1}{p(I_j)} = 2^C.$$
By Markov's inequality,
$$ \Pr_J\left[\frac{1}{p(I_J)} > 10\cdot 2^C\right]\leq 0.1. $$
Therefore, with probability at least $0.9$ over the choice of $J$, we have $p(I_J)\geq 0.1\cdot2^{-C}$, which implies that $I_J\subseteq\cI^{\mathrm{high}}$. Hence, $p(\cI^{\mathrm{high}}) \geq 0.9$.
\end{proof}

    We now consider the case where both $A$ and $B$ belong to the same $I_j\in\cI^{\mathrm{high}}$. We write $S(n)\mid I_j$ to denote the conditional distribution of $S(n)$ conditioned on the event that the realization lies in the subset $I_j$. Since $A$ and $B$ are independent, for any such $j$ we have $\Pr(A,B\in I_j)=p(I_j)^2.$
    
    Conditioned on this event and after receiving the message, Bob's posterior view is that $A$ is distributed according to $S(n)\mid I_j$, which is identical to $B$. Therefore, Bob's optimal strategy is to compare $B$ against the distributional median of $S(n)\mid I_j$. In particular, Bob can proceed as follows:

    \begin{enumerate}
    \item Let $z$ denote the median of $I_j$ under the distribution $S(n)\mid I_j$.
    \item Partition $I_j$ into $I_j^<\cup\{z\}\cup I_j^>$, where
    $$
    I_j^<=\{z'\in I_j\mid z'<z\}
    \quad\text{and}\quad
    I_j^>=\{z'\in I_j\mid z'>z\}.
    $$
    \item If $B\in I_j^<$, Bob outputs that $A>B$.
    \item If $B\in I_j^>$, Bob outputs that $A<B$.
    \item If $B=z$, Bob outputs a uniformly random guess.
\end{enumerate}

We can now use this behavior to analyse the probability that $\cP$ produces
the correct output.

\begin{claim}\label{clm:bound_AB_diff}
Suppose that $I_j\in\mathcal{I}^{\mathrm{high}}$. Then, $\Pr(A,B\in I_j^{<}\land A\neq B\mid A,B\in I_j)\geq 0.2.$
\end{claim}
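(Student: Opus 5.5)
The plan is to use only two facts: under the conditioning, $A$ and $B$ are i.i.d.\ draws from $S(n)\mid I_j$, and no single point carries much conditional mass. Let $D_j = S(n)\mid I_j$ and set $q := \max_{w\in I_j} p(w)/p(I_j)$, the largest conditional point mass. Conditioned on $A,B\in I_j$, independence gives
\[
\Pr(A,B\in I_j^{<}\land A\neq B\mid A,B\in I_j) \;\geq\; \Pr_{X\sim D_j}(X<z)^2 - \Pr_{X,X'\sim D_j}(X=X'),
\]
where $X,X'$ are independent copies. It then suffices to lower bound the first term and upper bound the second.

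\textbf{Step 1 (bounding $q$).} By Fact~\ref{fact:binom_max_bound}, every point of $S(n)$ has $p(w)\leq \sqrt{2/\pi}/\sqrt{n}\leq 0.8/\sqrt{n}$. Since $I_j\in\cI^{\mathrm{high}}$, we have $p(I_j)\geq 0.1\cdot 2^{-C}$. The contradiction hypothesis gives $2^{C}\leq \delta^{-0.1}$, and $\delta\geq n^{-0.1}$ gives $\delta^{-0.1}\leq n^{0.01}$. Hence $p(I_j)\geq 0.1\, n^{-0.01}$, and therefore $q\leq 8\,n^{-0.49}$.

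\textbf{Step 2 (the two terms).} By the definition of the median $z$ as the $1/2$-quantile of $D_j$, $\Pr_{D_j}(X\leq z)\geq 1/2$. Removing the atom at $z$ gives $\Pr_{D_j}(X<z)\geq 1/2-q$. For the collision term,
\[
\Pr(X=X')=\sum_{w}\Pr_{D_j}(X=w)^2\leq q\sum_w \Pr_{D_j}(X=w) = q.
\]
Combining, the target probability is at least $(1/2-q)^2-q\geq 1/4-2q$.

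\textbf{Step 3 (numerics, the main obstacle).} The only delicate point is ensuring $q$ is genuinely small, which needs $n$ large. We use the standing assumptions $\delta<1/220$ and $\delta\geq n^{-0.1}$. These force $n\geq \delta^{-10}>220^{10}$, so $n^{0.49}>10^{11}$ and $q<10^{-10}$. Thus $1/4-2q\geq 0.2$, as claimed.

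There is one point to double-check: the median is defined with respect to $D_j$ restricted to the support $I_j$. That is exactly the quantile definition from the preliminaries applied to $D_j$, so $\Pr(X\leq z)\geq 1/2$ holds as used.
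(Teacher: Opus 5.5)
Your proof is correct and follows essentially the same route as the paper: you lower bound the conditional mass of $I_j^{<}$ by $1/2$ minus the atom at the median, and subtract a collision term controlled by the $O(1/\sqrt{n})$ point-mass bound against $p(I_j)\geq 0.1\,n^{-0.01}$. The only differences are cosmetic: you condition the collision event on $A,B\in I_j$ rather than on $A,B\in I_j^{<}$, and you make the constants explicit via $n>220^{10}$ where the paper writes $o(1)$.
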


\begin{proof} Recall our assumption that $C < 0.1 \cdot \log{(1/\delta)}$ and $\delta \in [n^{-0.1},\frac{1}{220}]$. Since $I_j\in\mathcal{I}^{\mathrm{high}}$, we have $p(I_j)\geq 0.1\cdot 2^{-C}\geq \Omega(n^{-0.01}).$
Therefore,
$$
p(I_j^{<})
\geq \frac{p(I_j)}{2}-\Pr(B=z)
\geq \frac{p(I_j)}{2}-O\left(\frac{1}{\sqrt{n}}\right)
\geq \Omega(n^{-0.01}),
$$
where the second inequality follows from the fact that
$p(z)\leq O(1/\sqrt{n})$ via \Cref{cor: shifted binomial single element}. Consequently, we also have
\begin{align*}
\Pr(A,B\in I_j^{<}\mid A,B\in I_j) =
\left(\frac{p(I_j^{<})}{p(I_j)}\right)^2 \geq
\left(\frac{1}{2} - O\left(\frac{1}{p(I_j)\sqrt{n}}\right)\right)^2 \geq \frac{1}{4}-o(1).
\end{align*}

Next, we bound the probability that $A=B$. It follows that
\begin{align*}
\Pr(A=B\mid A,B\in I_j^{<})
&=
\frac{\Pr(A=B\land A,B\in I_j^{<})}
{\Pr(A,B\in I_j^{<})} = 
\frac{\sum_{z\in I_j^{<}}p(z)^2}{p(I_j^{<})^2} \leq
O\left(\frac{1}{p(I_j^{<})\sqrt{n}}\right)
\leq o(1).
\end{align*}
Here, the second inequality, again, follows from \Cref{cor: shifted binomial single element}, together with
$\sum_{z\in I_j^{<}}p(z)=p(I_j^{<})$.

Combining all the bounds above, we obtain
\begin{align*}
\Pr(A,B\in I_j^{<}\land A\neq B\mid A,B\in I_j) & \geq \Pr(A,B\in I_j^{<}\mid A,B\in I_j) -
\Pr(A=B\mid A,B\in I_j^{<})\\
&\geq
\frac{1}{4}-o(1)\\
&\geq 0.2
\end{align*}
as desired.
\end{proof}

We now wrap up the proof of \Cref{thm: log(1/delta)}. 
\begin{proof}[Proof of \Cref{thm: log(1/delta)}]
    Conditioned on the event $A,B\in I_j^<$ and $A\neq B$, Bob errs with probability $1/2$. To see this, consider the symmetric event where Alice and Bob's inputs are swapped. This event occurs with the same probability and induces exactly the same execution of the protocol, yet the correct output is reversed. Therefore, at least one of the two executions must result in an error.

Using this observation, we can write
\begin{align*}
& \Pr(\cP \text{ errs}) \\
&\geq \sum_{I_j\in\mathcal{I}^{\mathrm{high}}} \Pr(\cP\text{ errs}\land A,B\in I_j)\\
& \geq \sum_{I_j\in\mathcal{I}^{\mathrm{high}}} \Pr(\cP\text{ errs}\mid A,B\in I_j^<\land A\neq B) \cdot \Pr(A,B\in I_j^<\land A\neq B\mid A,B\in I_j) \cdot \Pr(A,B\in I_j)\\
&\geq \sum_{I_j\in\mathcal{I}^{\mathrm{high}}} \frac{1}{2}\cdot 0.2\cdot p(I_j)^2 \tag{\Cref{clm:bound_AB_diff}}\\
& = 0.1\sum_{I_j\in\mathcal{I}^{\mathrm{high}}}p(I_j)^2\\
& \geq 0.01\cdot 2^{-C} \sum_{I_j\in\mathcal{I}^{\mathrm{high}}}p(I_j) \tag{$p(I_j)\geq 0.1\cdot 2^{-C}$ for all $I_j\in\mathcal{I}^{\mathrm{high}}$}\\
& = 0.01\cdot 2^{-C}\cdot p(\cI^{\mathrm{high}}) \tag{\text{Definition of $p(\cI^{\mathrm{high}})$}}\\
& \geq 0.009\cdot 2^{-C} \tag{\text{$p(\cI^{\mathrm{high}})\geq 0.9$ via \Cref{clm:bound_Ihigh}}.}
\end{align*}

Finally, since we assumed that $\Pr(\cP\text{ errs})\leq\delta$, the above inequality implies that $C\geq \log(1/\delta)-7,$ which contradicts our assumption that $C\leq 0.1\cdot\log(1/\delta)$ (using $\delta<1/220$). Therefore, we must have $C\geq 0.1\cdot\log(1/\delta) = \Omega(\log(1/\delta))$ which completes the proof of \Cref{thm: log(1/delta)}.
\end{proof}

\section{The \texorpdfstring{$O(\log\log n + \log(1/\delta))$}{} Upper bound}
\label{sec:UB}

We now prove our algorithmic result, \Cref{thm: protocol}.
We remark that, whilst, on a high level, the proof follows the approach outlined in \Cref{subsec:approach}, implementing this approach boils down to essentially a case distinction.

\PROTOCOL*

\paragraph{Alice's Message.} 
We first describe Alice's part of the protocol and analyze the message length.

\begin{tcolorbox}[standard jigsaw,opacityback=0,width=0.95\textwidth,breakable]
  \textbf{Protocol $\cP$ for the coin problem: Alice's turn}\label{protocol: coin_Alice}

  \paragraph*{Input:}
   Alice receives $a\in[n]$ many i.i.d. coin tosses, and also the error parameter $\delta\in(0,1)$.

  \paragraph*{The Protocol:}
\begin{enumerate}
    \item Let $A$ denote the total sum of Alice's coins.
    
    \item If $a \leq \delta^{-6}$, Alice sends $\mathbbm{1}(a \leq \delta^{-6}) = 1$ to indicate this case, together with $A$.
    
    \item Otherwise, $a > \delta^{-6}$. Alice sends $\mathbbm{1}(a \leq \delta^{-6}) = 0$ to indicate this case, together with the following information:
    \begin{enumerate}
        \item $x = \lfloor \log_2 a \rfloor$.
        
        \item $\sign(A) \in \{\pm 1\}$, indicating her majority element.
        
        \item Let $w = \frac{4}{\delta^2}\cdot 2^x$ and $r = \lceil 4/\delta^3 \rceil$. Alice computes $r_A \in \{1,\ldots,r\}$ such that the rank of $A$ under $\mathcal{N}(0,w)$ lies in the interval $\left[\frac{r_A-1}{r}, \frac{r_A}{r}\right).$
        She then sends $r_A$ to Bob.
    \end{enumerate}
\end{enumerate}
\end{tcolorbox}

\vspace{2mm}

\begin{lemma}
    Alice's message is of size $O(\log\log n + \log(1/\delta))$ bits.
\end{lemma}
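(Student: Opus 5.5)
The plan is to bound the length of each field of Alice's message separately, splitting on the leading indicator bit, which costs one bit. Both branches need only a direct count of how many values the transmitted quantities can take. So the proof is a short case analysis on whether $a \leq \delta^{-6}$.

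\textbf{Case $a \leq \delta^{-6}$.} Alice sends the indicator bit together with $A$. Since $A \in [-a,a]$, it takes at most $2a+1 \leq 2\delta^{-6}+1$ values. It can therefore be encoded in $\lceil \log(2\delta^{-6}+1)\rceil = O(\log(1/\delta))$ bits, for a total of $O(\log(1/\delta))$ bits. Bob must be able to parse the message, so Alice uses a fixed-length encoding of width $\lceil\log(2\delta^{-6}+1)\rceil$; this is possible because $\delta$ is known to both players.

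\textbf{Case $a > \delta^{-6}$.} Alice sends three fields. First, $x = \lfloor \log_2 a\rfloor$ lies in $\{0,1,\ldots,\lfloor\log_2 n\rfloor\}$ because $a \in [n]$. It takes at most $\log n + 1$ values, so a fixed-width encoding uses $O(\log\log n)$ bits. Second, $\sign(A)\in\{\pm1\}$ takes one bit. Third, $r_A \in \{1,\ldots,r\}$ with $r = \lceil 4/\delta^3\rceil$. This uses $\lceil \log r\rceil = O(\log(1/\delta))$ bits. Note that $r_A$ is well defined: the rank of any real number under $\mathcal{N}(0,w)$ lies in $[0,1)$, so exactly one interval $[\frac{r_A-1}{r},\frac{r_A}{r})$ contains it. Summing the fields gives $1 + O(\log\log n) + 1 + O(\log(1/\delta)) = O(\log\log n + \log(1/\delta))$.

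The only subtlety is the edge case where $\delta$ is close to $1$ and $\log(1/\delta)$ is tiny or zero. In that case the additive constants from the indicator and sign bits dominate. These are absorbed by reading the bound as $O(1+\log\log n+\log(1/\delta))$, which is the same as the stated bound for $n\geq 4$.

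Taking the maximum over the two cases completes the proof. The message can be padded to a common fixed length so that Bob decodes it unambiguously after reading the indicator bit. I expect no genuine obstacle here: the lemma is pure bookkeeping, and the substantive work lies in the later correctness analysis.
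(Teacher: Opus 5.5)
Your proof is correct and follows the same case analysis as the paper: one indicator bit, then $A$ with $|A|\le\delta^{-6}$ costing $O(\log(1/\delta))$ bits in the first case, and in the second case $x=O(\log n)$ costing $O(\log\log n)$ bits plus $r_A\le 4/\delta^3+1$ costing $O(\log(1/\delta))$ bits. Your extra remarks are sound refinements of details the paper leaves implicit: the sign bit, fixed-width encoding, well-definedness of $r_A$, and absorbing additive constants when $\delta$ is near $1$.
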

\begin{proof}
We proceed with two cases.
\begin{enumerate}
    \item First, if $a \leq \delta^{-6}$, Alice sends a single bit together with $A$. Since $A$ is an integer with $|A| \leq \delta^{-6}$, these can be encoded together using $O(\log(1/\delta)) + 1 = O(\log(1/\delta))$ bits.
    \item Otherwise, if $a > \delta^{-6}$, Alice's message indicates this with a single bit. Then, since $a \in [n]$, we have $x = O(\log n)$, so the binary representation of $x$ requires only $O(\log\log n)$ bits. Moreover, $r_A \leq \frac{4}{\delta^3}+1$ which can be encoded using $O(\log(1/\delta))$ bits.
\end{enumerate}
Therefore, the total communication cost (in either case) is $O(\log\log n + \log(1/\delta))$ bits.
\end{proof}

\paragraph{Bob's Side.}
We now describe Bob's side of the protocol.

\begin{tcolorbox}[standard jigsaw,opacityback=0,width=0.95\textwidth,breakable]
  \textbf{Protocol $\cP$ for the coin problem: Bob's turn}\label{protocol: coin_Bob}

  \paragraph*{Input:}
  Bob receives $b\in[n]$ many i.i.d. coin tosses, the error parameter $\delta\in(0,1)$, and also Alice's message.
  
  \paragraph*{The Protocol:}

\begin{enumerate}
    \item Let $B$ denote the total sum of Bob's coins.
    
    \item If Bob receives $\mathbbm{1}(a \leq \delta^{-6}) = 1$ from Alice, then he also receives the exact value of $A$. In this case, he outputs $\sign(A+B)$.

    \item Otherwise, Bob receives $\mathbbm{1}(a \leq \delta^{-6}) = 0$ from Alice confirming $a > \delta^{-6}$, and then proceeds as follows:

    \begin{enumerate}
    \item If $b \leq \delta^{-2}$, he outputs $\sign(A)$.

    \item Otherwise, we have $a > \delta^{-6}$ and $b > \delta^{-2}$. Bob then computes $y = \lfloor \log_2 b \rfloor$.

    \begin{enumerate}
    \item If $y \geq x + \lceil 2\log_2(1/\delta) \rceil + 1$, Bob outputs his own majority, denoted by $\sign(B) \in \{\pm 1\}$.

    \item If $x \geq y + \lceil 2\log_2(1/\delta) \rceil + 1$, Bob outputs Alice's majority $\sign(A)$.

    \item Otherwise, we have $|x-y| \leq \lceil 2\log_2(1/\delta) \rceil$. Bob reconstructs $w = \frac{4}{\delta^2}\cdot 2^x$ and sets $r = \lceil 4/\delta^3 \rceil$. He computes $r_B \in \{1,\ldots,r\}$ such that the rank of $B$ under $\mathcal{N}(0,w)$ lies within
    $\left[\frac{r_B-1}{r}, \frac{r_B}{r}\right).$
    Bob then computes the output as follows:
    \begin{enumerate}
        \item If $r_A + r_B \leq r$, Bob answers ``$-1$''.
        \item If $r_A + r_B \geq r+2$, Bob answers ``$+1$''.
        \item If $r_A + r_B = r+1$, Bob outputs a uniform random guess.
    \end{enumerate}
    \end{enumerate}
    \end{enumerate}
\end{enumerate}
\end{tcolorbox}
\vspace{3mm}

All that remains is to analyze the correctness of this protocol. This follows by combining  Claims \ref{claim:protocol_line2}, \ref{claim:protocol_line3}, \ref{claim:protocol_line45}, and \ref{claim:protocol_line8}, each of which correspond to one of the possible ways in which Bob produces his output. Together, these claims imply that the protocol $\cP$ fails with probability at most $O(\delta)\leq c\delta$, for some universal constant $c>0$. Then, rescaling the error parameter by setting $\delta'=\delta/c$ yields a protocol with the desired error-rate $\delta$.

\begin{claim}
Suppose $a \leq \delta^{-6}$. Then Bob's output in Step~2 is always correct.
\label{claim:protocol_line2}
\end{claim}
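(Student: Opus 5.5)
This claim is essentially immediate from the protocol's description, so the plan is simply to unpack Step~2 of both Alice's and Bob's turns. First, I will observe that when $a \leq \delta^{-6}$, Alice's indicator bit $\mathbbm{1}(a \leq \delta^{-6})$ equals $1$ and she transmits the exact integer $A$. The encoding is lossless, since $|A| \leq a \leq \delta^{-6}$ fits in the allotted $O(\log(1/\delta))$ bits. Bob therefore reads the indicator, branches into Step~2, and recovers $A$ exactly.

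Next, Bob knows his own sum $B$ exactly. He can thus compute $A+B$, which by definition is the sum of all coins in $C_{AB}$. Consequently his output $\sign(A+B)$ coincides with $\sign(C_{AB})$, the true majority, for every realization of the coins and with no randomness involved.

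The only subtlety, and hardly an obstacle, is the tie case $A+B=0$. By the convention in the preliminaries, either sign counts as correct there (or, for the purposes of the upper bound, any output is acceptable). Hence Bob's output in Step~2 is correct with probability $1$, which is exactly the claim, and this branch contributes zero to the overall failure probability of $\cP$.
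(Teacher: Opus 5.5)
Your proposal is correct and follows the same argument as the paper: the indicator bit and the exact value of $A$ let Bob compute $A+B$, so his output $\sign(A+B)$ is always correct. Your remark on the tie case $A+B=0$ agrees with the paper's convention that either sign counts as correct there.
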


\begin{proof}
If $a \leq \delta^{-6}$, Bob is informed of this fact and receives the exact value of $A$. Since he also knows $B$, he can compute $A+B$ exactly and therefore always outputs the correct answer.
\end{proof}

\begin{claim}
Suppose $a > \delta^{-6}$ and $b \leq \delta^{-2}$. Then Bob's output in Step 3(a) errs with probability $O(\delta)$.
\label{claim:protocol_line3}
\end{claim}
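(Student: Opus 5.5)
This follows directly from the key lemma, Lemma~\ref{lem: majority lines up with GT}. In Step~3(a) Bob outputs $\sign(A)$, so it suffices to show $\Pr(\sign(C_{AB}) \neq \sign(C_A)) = O(\delta)$ whenever $a > \delta^{-6}$ and $b \leq \delta^{-2}$. The plan is to verify the lemma's hypothesis $b \leq 0.02\cdot a\delta'^2/\ln(4/\delta')$ for a suitable error parameter $\delta' = \Theta(\delta)$. The lemma then gives failure probability at most $\delta' = O(\delta)$.

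Concretely, we have $b/a < \delta^{-2}\cdot\delta^{6} = \delta^4$. Take $\delta' = \delta$, so we need
\[
\delta^4 \leq 0.02\,\delta^2/\ln(4/\delta),
\qquad\text{that is,}\qquad
\delta^2\ln(4/\delta)\leq 0.02 .
\]
This holds for all sufficiently small $\delta$, say $\delta \leq \delta_0$ for an absolute constant $\delta_0$; a quick check shows $\delta_0 = 0.05$ works, since $0.0025\cdot\ln 80 \approx 0.011$. For $\delta > \delta_0$ the claimed bound $O(\delta)$ is at least a constant. The trivial bound of $1$ on the error probability therefore already gives $O(\delta)$, with the constant in the $O(\cdot)$ absorbing $1/\delta_0$.

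The only step needing any care is the regime split, so that the hidden constant is universal. An alternative that avoids the split is to set $\delta' = c\delta$ for a small constant $c$, but the two-regime argument is simplest. One should also note that the lemma is invoked with Alice as the "large" player. Ties, where $A = 0$ or $A+B = 0$, are already accounted for inside the lemma's event $|C_A| > |C_B|$, which forces both signs to exist and agree by Fact~\ref{fact: abs val vs sign}. No further obstacle is expected.
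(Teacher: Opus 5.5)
Your proof is correct, but it takes a different route from the paper. You reduce the claim to Lemma~\ref{lem: majority lines up with GT} with $\delta'=\delta$. You check its hypothesis via $b/a<\delta^4\le 0.02\,\delta^2/\ln(4/\delta)$ for $\delta\le 0.05$, and use the trivial bound otherwise; this is sound.

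The paper instead argues directly, and needs no concentration for Bob at all, since $|B|\le b\le\delta^{-2}$ holds deterministically. It bounds $\Pr(|A|\le\delta^{-2})$ by summing the pointwise estimate $2^{-a}\binom{a}{k}=O(1/\sqrt{a})$ from Fact~\ref{fact:binom_max_bound} over the $O(\delta^{-2})$ relevant values of $k$. Since $a>\delta^{-6}$, this gives $O(\delta^{-2}\cdot\delta^{3})=O(\delta)$, and Fact~\ref{fact: abs val vs sign} then finishes.

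The direct route is self-contained and works uniformly in $\delta$, with no regime split and no $\ln(4/\delta)$ factor to absorb. It also avoids the Chernoff and anti-concentration machinery inside the lemma. Yours is shorter once the lemma is available.

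One correction to your aside: choosing $\delta'=c\delta$ with small $c$ would not remove the split. The lemma's hypothesis $b/a\le 0.02\,\delta'^2/\ln(4/\delta')$ only gets harder as $\delta'$ shrinks. Moreover, its right-hand side is below $0.02/\ln 4\approx 0.014$ for every admissible $\delta'<1$. So, for instance, with $\delta=0.9$, $b=1$, $a=2$ (which satisfies the claim's hypotheses) the lemma cannot be invoked at all. The split is therefore genuinely needed in your approach, and you were right to include it.
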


\begin{proof}
The probability that $|A| \leq \delta^{-2}$ is
$$
\sum_{|i|\leq \delta^{-2}/2}
    2^{-a} \cdot \binom{a}{a/2+i}
    \leq \delta^{-2}\cdot O\!\left(\frac{1}{\sqrt{a}}\right)
    \leq O(\delta),
$$
where the first inequality follows from \Cref{fact:binom_max_bound}.

Therefore, with probability at least $1-O(\delta)$, we have $|A|>\delta^{-2}$. Since $b\leq \delta^{-2}$, we must have $|B|\leq \delta^{-2}$. Using \Cref{fact: abs val vs sign}, Alice's output $\sign(A)$ must be correct.
\end{proof}

\begin{claim} Suppose $x \geq \delta^{-6}$ and $y \geq \delta^{-2}$. If $y \geq x+ \lceil 2\log_2{(1/\delta)}\rceil + 1$, Bob's answer in Step 3(b)(i) errs with probability $O(\delta)$. Similarly, if $x \geq y+ \lceil 2\log_2{(1/\delta)}\rceil + 1$, Bob's answer in Step 3(b)(ii) errs with probability $O(\delta)$.
\label{claim:protocol_line45}
\end{claim}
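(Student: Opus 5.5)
The plan is to bound the error directly by anti-concentration of the larger player's sum, averaged over the smaller player's sum. I will not go through Lemma~\ref{lem: majority lines up with GT}: its hypothesis would need a ratio of order $\delta^{-2}\ln(1/\delta)$, whereas the stated gap only gives a ratio of $\delta^{-2}$. By symmetry it suffices to treat Step 3(b)(i), where $y \geq x + \lceil 2\log_2(1/\delta)\rceil + 1$ and Bob outputs $\sign(B)$. Step 3(b)(ii) is identical with the roles of $A$ and $B$ swapped.

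\textbf{Step 1: the coin counts.} Since $x=\lfloor \log_2 a\rfloor$ and $y=\lfloor\log_2 b\rfloor$, we have $a < 2^{x+1}$ and $b \geq 2^{y}$. Hence
\[
\frac{a}{b} < 2^{x+1-y} \leq 2^{-\lceil 2\log_2(1/\delta)\rceil} \leq \delta^2 .
\]

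\textbf{Step 2: reduce to a comparison of absolute values.} By Fact~\ref{fact: abs val vs sign}, whenever $|B| > |A|$ we get $\sign(A+B)=\sign(B)$, so Bob's output is correct. Thus
\[
\Pr(\text{error}) \leq \Pr(|B| \leq |A|).
\]

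\textbf{Step 3: condition on $A$ and apply anti-concentration.} Condition on $A$, which is independent of $B$. Write $B = 2X - b$ with $X\sim\cB(b,1/2)$. Then $|B|\leq |A|$ is exactly the event $|X - b/2| \leq \alpha\sqrt{b}$ with $\alpha = |A|/(2\sqrt b)$. For $|A|>0$, Theorem~\ref{thm: anti-concentration} gives
\[
\Pr(|B|\leq |A| \mid A) \leq \frac{2e}{\pi}\cdot\frac{|A|}{2\sqrt b} = \frac{e}{\pi}\cdot\frac{|A|}{\sqrt b}.
\]
The case $A=0$ is the event $B=0$. Its probability is $O(1/\sqrt b)=O(\delta)$ by Corollary~\ref{cor: shifted binomial single element}, since $b>\delta^{-2}$ in this branch.

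\textbf{Step 4: average over $A$.} Taking expectation over $A$ and using Cauchy--Schwarz, $\bE|A| \leq \sqrt{\bE[A^2]} = \sqrt a$. Therefore
\[
\Pr(|B|\leq|A|) \leq \frac{e}{\pi}\sqrt{\frac{a}{b}} + O(\delta) \leq \frac{e}{\pi}\,\delta + O(\delta) = O(\delta),
\]
using Step 1. This proves the claim.

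The main subtlety is avoiding the $\sqrt{\ln(1/\delta)}$ loss that a worst-case high-probability bound on $|A|$ would incur, as in the proof of Lemma~\ref{lem: majority lines up with GT}. Averaging the linear anti-concentration bound against $\bE|A|\leq\sqrt a$ is exactly what removes this loss.
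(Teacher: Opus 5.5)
Your proof is correct, and it takes a genuinely different route from the paper's. The paper simply invokes Lemma~\ref{lem: majority lines up with GT}, asserting that the gap condition gives a coin ratio of $\delta^{-4}$. As you observe, however, $b \ge 2^{y} \ge 2^{\lceil 2\log_2(1/\delta)\rceil}\cdot 2^{x+1} > \delta^{-2}a$ only gives a ratio of $\delta^{-2}$. With that ratio, the lemma's hypothesis $a \le 0.02\, b\delta'^2/\ln(4/\delta')$ forces $\delta' = \Theta(\delta\sqrt{\log(1/\delta)})$. So the paper's argument as written only certifies error $O(\delta\sqrt{\log(1/\delta)})$, unless the threshold is enlarged to about $4\log_2(1/\delta)$.

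The loss comes from how the lemma works: it compares a worst-case, high-probability bound $|C_B| = O(\sqrt{b\ln(1/\delta)})$ on the smaller sum against anti-concentration of the larger sum. Your conditioning-and-averaging step instead uses that the anti-concentration bound is linear in the window width. Hence only $\bE|A| \le \sqrt{a}$ enters, and you get error $O(\sqrt{a/b} + 1/\sqrt{b})$, which is $O(\delta)$ with the protocol's thresholds unchanged. The paper's route buys reuse of a lemma it needs anyway for the lower bound. Yours buys the sharp $\sqrt{a/b}$ dependence and repairs the claim without modifying the protocol.

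One small caveat: the paper's anti-concentration theorem is not literally valid for windows of constant width. For odd $b$ and $|A| = 1$, one has $\Pr(|B| \le 1) \approx 1.6/\sqrt{b}$, which exceeds $(e/\pi)/\sqrt{b}$, so your Step~3 inequality can fail for small $|A|$. This is harmless. Counting the at most $|A|+1$ lattice points in the window and using Fact~\ref{fact:binom_max_bound} gives $\Pr(|B| \le |A| \mid A) \le (|A|+1)\sqrt{2/(\pi b)}$, which also covers $A=0$. Averaging then yields $\sqrt{2/\pi}\,(\sqrt{a/b} + 1/\sqrt{b}) < 2\sqrt{2/\pi}\,\delta$.
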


\begin{proof} Assume that $y \geq x+ \lceil 2\log_2{(1/\delta)}\rceil + 1$. This means
$a \geq 2^{y} \geq 2^{x+ \lceil 2\log_2{(1/\delta)}\rceil + 1} = 2^{\lceil 2 \log_2{(1/\delta)}\rceil} \cdot \left(2^{x+1}\right) \geq \frac{1}{\delta^4} \cdot b.$
Then, Bob's answer is wrong if and only if $\sign(A) \ne \sign(AB)$. Using \Cref{lem: majority lines up with GT}, this occurs with probability at most $\delta$. The symmetric case of $x \geq y+ \lceil 2\log_2{(1/\delta)}\rceil + 1$ can be shown analogously.
\end{proof}

\begin{claim}
Suppose $x \geq \delta^{-6}$, $y \geq \delta^{-2}$, and
$|x-y| \leq \lceil 2\log_2(1/\delta)\rceil$. Then Bob's output in Step 3(b)(iii) errs with probability $O(\delta)$.
\label{claim:protocol_line8}
\end{claim}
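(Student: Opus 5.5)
We first fix how to read the hypotheses. The conditions ``$x \geq \delta^{-6}$, $y \geq \delta^{-2}$'' in the claim statement are clearly meant as the Step~3 conditions $a > \delta^{-6}$ and $b > \delta^{-2}$. The substantive assumption is $|x-y| \leq \lceil 2\log_2(1/\delta)\rceil$. Since $2^x \leq a < 2^{x+1}$, it gives $a,b \leq 2^{x+O(\log(1/\delta))} = O(\delta^{-2})\cdot 2^x$. Hence both variances are at most $w/4$ up to constants, with $w = 4\delta^{-2}2^x$. Also $a \geq 2^x = \delta^2 w/4$.

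The idea is that Bob's rule ``$r_A + r_B \leq r \Rightarrow -1$'' is exactly the correct rule $\sign(A+B)$ whenever the Gaussian ranks are known exactly. The rank of $-t$ under the symmetric law $\mathcal{N}(0,w)$ is $1-\Phi_w(t)$. So, ignoring atoms, $A+B<0$ iff $A<-B$ iff $\Phi_w(A) < 1-\Phi_w(B)$, i.e. $\mathrm{rank}(A)+\mathrm{rank}(B) < 1$. Rounding each rank to a multiple of $1/r$ loses at most $1/r$ each. Thus when $r_A + r_B \leq r$ we have $\mathrm{rank}(A)+\mathrm{rank}(B) < r_A/r + r_B/r \leq 1$, so $A < -B$ (up to the tie/atom convention) and the answer is correct. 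Symmetrically, $r_A+r_B \geq r+2$ gives $\mathrm{rank}(A)+\mathrm{rank}(B) \geq 1$ plus slack, hence $A>-B$ apart from boundary effects. So the only errors come from the ambiguous case $r_A + r_B = r+1$ and from ties $A+B=0$.

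It therefore suffices to bound $\Pr(r_A + r_B = r+1)$ and $\Pr(A+B=0)$ by $O(\delta)$. For the tie, $\Pr(A=-B) \leq \max_z \Pr(A=z) = O(1/\sqrt a) = O(\delta^3)$ by Corollary~\ref{cor: shifted binomial single element}. For the ambiguous case, condition on $B$. Then $r_B$ is fixed, and $r_A$ must equal the single value $r+1-r_B$. This means the rank $\Phi_w(A)$ lies in one fixed interval of length $1/r$, i.e. $A$ lies in a fixed interval $J=[u,v]$ with $\Phi_w(v)-\Phi_w(u) \leq 1/r$. We bound $\Pr(A\in J)$ as follows. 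Fact~\ref{fact:dist_couple} compares $A$ to $\mathcal{N}(0,a)$ with error $O(1/\sqrt a) = O(\delta^3)$. Fact~\ref{fact:pdf_bound} with $r=a\le s=w$ gives $\phi_a \leq \sqrt{w/a}\,\phi_w = O(\delta^{-1})\phi_w$. (If $a>w$ we instead use $\phi_a \leq \phi_w$ times a constant, since $a \le O(\delta^{-2})2^x$ is within a constant of $w$; I would adjust constants there.) Hence
\[\Pr(A\in J) \leq O(\delta^{-1})\cdot \tfrac1r + O(\delta^{3}) = O(\delta^{2}),\]
using $r \geq 4\delta^{-3}$. Averaging over $B$ keeps this bound. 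So the total error in Step~3(b)(iii) is $O(\delta^2)+O(\delta^3) = O(\delta)$.

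The main obstacle is the middle step: getting the rounding-boundary event to have small probability when the reference variance $w$ does not match $a$. The ratio $w/a$ can be as large as $4\delta^{-2}$, which costs a factor $\delta^{-1}$ in density. That loss is exactly what the fine grid $r \approx \delta^{-3}$ is chosen to absorb. I would also double check the boundary conventions, namely half-open rank intervals and the $r+2$ threshold, to be sure the ``correct when unambiguous'' argument holds exactly. If any slip remains there, it is absorbed by an extra adjacent-bucket event, whose probability is bounded identically.
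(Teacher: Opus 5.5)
Your proof is correct, but you run the key estimate from the opposite side to the paper.

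Both arguments begin the same way: the rounded ranks determine $\sign(A+B)$ whenever $r_A+r_B\neq r+1$. Your hedging about ``boundary effects'' is unnecessary, since the only boundary case is the tie $A+B=0$, where the paper counts either sign as correct. Likewise, your separate bound on $\Pr(A+B=0)$ is harmless but not needed.

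For the ambiguous case, the paper conditions on Alice's coins, so that $r_B$ is forced into one fixed bucket. It then bounds $\Pr(B\in\mathcal{I})$ by comparing Bob's variance $b$ to $w$. That route genuinely needs $|x-y|\le\lceil 2\log_2(1/\delta)\rceil$ in both directions, to get $b\le w\le 8\delta^{-4}b$. The left inequality gives the direction of Fact~\ref{fact:pdf_bound}, and the right one controls $\sqrt{w/b}$. It also needs $b>\delta^{-2}$ so that the coupling error $O(1/\sqrt{b})$ is $O(\delta)$. It yields $O(\delta)$, which is all that is required.

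You instead condition on Bob and bound the probability that Alice's rank lands in a fixed bucket. Because $w$ is built from Alice's own $x$, you automatically have $2^x\le a<2^{x+1}\le w$. Hence $\sqrt{w/a}\le 2/\delta$ and the coupling error is $O(1/\sqrt{a})=O(\delta^3)$. This gives the stronger bound $O(\delta^2)$, without ever using the hypothesis on $|x-y|$ or $b>\delta^{-2}$. In effect, your argument shows that the rule of Step~3(b)(iii) is accurate for every $b$ once $a>\delta^{-6}$.

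One small correction: the case $a>w$ in your parenthetical never arises, by the inequality $a<2^{x+1}\le w$ just noted. That is fortunate, because the fallback you sketch, $\phi_a\le c\,\phi_w$ for $a>w$, is false: the ratio $\phi_a/\phi_w$ is unbounded in the tails.
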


\begin{proof}
We first show that Bob is always correct if he enters Step 3(b)(iii)(A) with $r_A+r_B \leq r$. Consider
\begin{align*}
    A \leq \left[\text{quantile } \frac{r_A}{r} \text{ of } \mathcal{N}(0,w)\right]
= -\left[\text{quantile } \frac{r-r_A}{r} \text{ of } \mathcal{N}(0,w)\right]
& \leq -\left[\text{quantile } \frac{r_B}{r} \text{ of } \mathcal{N}(0,w)\right]\\
    &\leq -B.
\end{align*}
Hence $A+B\leq 0$, so Bob's output is always correct. An analogous argument shows that Bob is also always correct if he enters Step 3(b)(iii)(B) with $r_A+r_B \geq r+2$.

It remains to analyze Step 3(b)(iii)(C), where $r_A+r_B=r+1$. In this case, Bob might err, but we will show that Bob only enters this case with probability $O(\delta)$ which then concludes the proof. Specifically, fix any $\ell\in [r]$. We will show that $\Pr[r_B=\ell]=O(\delta)$ where the probability is over the randomness of Bob's coins.

Since $w=\frac{4}{\delta^2}\cdot 2^x$ and
$|x-y|\leq \lceil 2\log_2(1/\delta)\rceil$, we have
$$
y
\leq 2^{y+1}
\leq \frac{4}{\delta^2}\cdot 2^x
= w \hspace{5mm} \text{ and } \hspace{5mm}
\frac{\delta^2}{2}\cdot w
= 2^{x+1}
\leq \frac{4}{\delta^2}\cdot 2^y
\leq \frac{4}{\delta^2}\cdot y.
$$
In other words, we have
$$
1\leq \frac{w}{y}\leq \frac{8}{\delta^4}.
$$

Let $\mathcal{I}=[p,q)\subseteq\mathbb{R}$ denote the interval corresponding to the quantile range
$\left[\frac{\ell-1}{r},\frac{\ell}{r}\right)$
of $\mathcal{N}(0,w)$. By the width of this range, we have
$$
\int_{\mathcal{I}} \ \phi_w(z)\,dz
=\frac{1}{r}
\leq \frac{\delta^3}{4}.
$$
Moreover, we have
\begin{align*}
\Pr(B\in\mathcal{I})
=\Pr(B\leq q)-\Pr(B\leq p) 
&\leq \Phi_y(r)-\Phi_y(l)+O(1/\sqrt{y})
\leq \int_{\mathcal{I}}\phi_y(z)\,dz+O(\delta),
\end{align*}
where the first two inequalities follow from \Cref{fact:dist_couple} and the fact that $y\geq \delta^{-2}$. Finally, by \Cref{fact:pdf_bound}, we have
$$ \int_{\mathcal{I}}\phi_y(z)\,dz 
\leq \sqrt{\frac{w}{y}} \int_{\mathcal{I}}\phi_w(z)\,dz 
\leq \sqrt{\frac{8}{\delta^4}} \cdot \frac{\delta^3}{4}
= O(\delta).$$
Combining all inequalities, we shall have
$\Pr[r_B=\ell] = \Pr(B\in\mathcal{I}) = O(\delta),$
completing the proof.
\end{proof}

\section{Round-Communication Tradeoffs of the Coin Problem}\label{app: multi-round bounds}

In this section, we prove our multi-round tradeoffs, \Cref{thm:multiround}.

\roundcommtradeoff*

\subsection{Round-Communication Tradeoff of \tGT}

As before, our starting point is the \tGT communication problem.

\begin{definition}[Greater-Than]
Let $m$ be a positive integer. In the $\tGT(m)$ problem, Alice and Bob are given $G_A,G_B\in[m]$ respectively, and must determine whether $G_A>G_B$.
\end{definition}

We use the following upper bound, but defer its proof to Appendix \ref{app:r_round_GT}\footnote{We also note that a protocol which uses $O((\log m)^{1/r}\cdot\log\log m)$ is also known \cite{s01}. However, our bound is stronger for nearly all ranges of $r$.}. 
To simplify the proof, we show a protocol that succeeds with probability $999/1000$.
However, it can also be made to succeed with any constant probability.

\begin{restatable}{lem}{rroundGT}
For any $r\in\bZ^+$, there is a randomized $r$-round two-player communication protocol for $\tGT(m)$ that succeeds with probability at least $999/1000$ using $O((\log m)^{1/r}\cdot r\log r)$ bits of communication.
\label{lem:r_round_GT_UB}
\end{restatable}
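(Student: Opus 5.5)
We plan an $r$-round protocol for $\tGT(m)$ by recursive block decomposition: a $k$-ary search over the binary representations of $G_A, G_B$, written as strings of length $L=\lceil\log m\rceil$. Set $k=\lceil L^{1/r}\rceil$ and view each input as a string of $L$ bits split into $k$ blocks of length $L/k$.

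\textbf{Round structure.} In each round the speaking player needs to locate, among the current $k$ blocks, the first block on which the two strings differ. Once that block is found, the players recurse on it, which has length $L/k$, and continue with the remaining $r-1$ rounds. After $r$ levels the blocks have length $L/k^r\le 1$, so the last speaker simply sends the relevant bit. Bob can then decide $G_A>G_B$ from the first differing bit.

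\textbf{Locating the first differing block.} Let the current player hold blocks $u_1,\dots,u_k$. For each $i$ she sends a hash $h(u_1\cdots u_i)$ of the prefix of length $i$ blocks, using a shared random hash with $t=O(\log(rk))$ output bits. The receiver compares these against his own prefix hashes and takes the smallest $i$ at which they disagree as the first differing block. The cost is $k\cdot t=O(L^{1/r}\log(rk))$ bits per round.

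\textbf{Handing off the recursion.} The next round is spoken by the other player, who now knows the index $i$. The original speaker does not, so the index is prepended to the next message, costing $O(\log k)$ bits. This is harmless, since after each message the receiver knows the index and reveals it in his own message.

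\textbf{Error and cost.} Each hash collision happens with probability $2^{-t}$. A union bound over $k$ prefixes and $r$ rounds gives total error $rk2^{-t}\le 1/1000$ once $t=\log(1000rk)$. The total communication is
\[
O\bigl(r\,L^{1/r}\log(rk)\bigr).
\]
Since $\log k=\frac1r\log L$, we have $L^{1/r}\log(rk)=L^{1/r}\log r+L^{1/r}\cdot\frac{\log L}{r}$.

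\textbf{Main obstacle.} The term $L^{1/r}\cdot\frac{\log L}{r}$ is what needs care in order to reach $O(L^{1/r}\, r\log r)$. If $\log L\le r^2\log r$, then $L^{1/r}\log L/r\le L^{1/r} r\log r$, and the bound holds. Otherwise $r$ is small relative to $\log L$, and we would rather spend all $r$ rounds on exact binary-search-style comparison; checking this regime is the main obstacle. The idea there is to replace the per-prefix hashes by a single noisy binary search over the $k$ prefixes. This costs $O(\log k+\log r)$ hash evaluations of $O(1)$ bits each, via a Feige-style noisy binary search with constant-bit hashes, which gives error $1/(1000r)$ per round in expectation. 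That still needs an interactive round, though. So instead the fallback is to send $O(\log r)$-bit hashes only at the geometric checkpoints of a $k$-ary tree of depth $O(1)$, which trades the $\log k$ factor for the $r$ in $r\log r$. The remaining work is to verify that $k$-ary blocking with $O(\log r)$-bit hashes plus a constant-depth verification layer gives $O(L^{1/r}\,r\log r)$ in every regime. This is then substituted with $L=\log m$.
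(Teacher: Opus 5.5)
Your protocol skeleton ($k$-ary search over $k=\lceil L^{1/r}\rceil$ blocks with shared-randomness hashes, recursing on the first differing block with the speaker alternating) is exactly the paper's. However, your error analysis is too lossy, and the proof as written does not reach the claimed bound. You union-bound over all $k$ prefix comparisons in every round, which forces $t=\log(1000rk)$. The $\log k$ part of $t$ costs an extra $r\cdot k\log k\approx L^{1/r}\log L$ bits in total. That is $O(rL^{1/r}\log r)$ only when $\log L=O(r\log r)$. Your threshold $\log L\le r^2\log r$ compares the per-round extra term against the total budget, so it is off by a factor of $r$. The overhead fails badly for constant $r$: for $r=2$ you get $\Theta(\sqrt{L}\log L)$ instead of $O(\sqrt{L})$. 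For that regime you sketch a fallback (noisy binary search, ``geometric checkpoints'', a ``constant-depth verification layer''), but it is not a protocol and you leave it explicitly unverified. So the lemma is not proved in the constant-$r$ regime, which is the one the multi-round theorem needs.

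The missing observation is that the hash-equality test is one-sided, so only one comparison per round can cause an error. Let $i^\ast$ be the true first differing block. For every $i<i^\ast$, the two prefixes $u_1\cdots u_i$ are identical, so their hashes agree with certainty. (The same holds in the paper's version, which hashes the individual blocks.) Hence the receiver's choice of the smallest disagreeing index equals $i^\ast$ unless the single pair at $i^\ast$ collides. Consider the nested chain of true first-differing blocks across the $r$ levels. These $r$ pairs of distinct strings are fixed by the inputs, and if none of them collides, induction shows every round recurses correctly. A union bound over just these $r$ events gives error at most $r2^{-t}$, so $t=\lceil\log(1000r)\rceil=O(\log r)$ suffices. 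The cost is then $r\cdot k\cdot O(\log r)=O(L^{1/r}r\log r)$. Your index hand-offs add $O(\log L)$ in total, which is absorbed because $rL^{1/r}=re^{(\ln L)/r}\ge\ln L$. This is precisely the paper's argument: it hashes each block into $[1000r]$ and notes there is ``one relevant equality test per round.'' With this observation no case analysis or fallback is needed.
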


We also note the following near-matching lower bound.

\begin{theorem}[\cite{sv08}]
Let $r\in\bZ^+$. Any $r$-round randomized protocol for $\tGT(m)$ that succeeds with probability at least $2/3$ must communicate at least $\Omega\left((\log m)^{1/r}/r^2\right)$ bits.
\end{theorem}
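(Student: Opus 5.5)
The plan is a round elimination argument, working with bit strings rather than integers. Identify $[m]$ with $\{0,1\}^{k}$ for $k=\lfloor\log m\rfloor$, ordered lexicographically, so that $\tGT(m)$ contains the problem of comparing two $k$-bit strings. Fix an $r$-round protocol in which every message has at most $c$ bits and the error is at most $1/3$; the aim is to show $k\le (Kcr^2)^r$ for a universal constant $K$, which rearranges to $c=\Omega(k^{1/r}/r^2)$. By Yao's principle it suffices to bound the distributional error of deterministic protocols. The hard distribution is defined recursively: split a $k$-bit string into $t$ blocks of length $k/t$. Alice's string $x$ consists of $t$ independent blocks, each drawn from the hard distribution at length $k/t$. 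Bob picks a uniform index $i\in[t]$, copies Alice's first $i-1$ blocks, draws his $i$-th block from the level-$k/t$ hard distribution, and fills the remaining blocks arbitrarily. Then $x>y$ lexicographically exactly when the $i$-th blocks compare that way. So a single instance is $t$ independent sub-instances on Alice's side, Bob holds the index $i$ and the prefix $x_{<i}$, and the answer is decided by sub-instance $i$. This is an augmented direct-sum structure.

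The key step is the round elimination lemma, in the distributional, public-coin form of Sen--Venkatesh or Miltersen--Nisan--Safra--Wigderson. Suppose Alice's first message has $c$ bits, her input is $t$ independent sub-instances, and Bob holds a uniform $i$ together with the prefix $x_{<i}$. By the chain rule, the mutual information between the message and $x_i$, given $x_{<i}$ and averaged over $i$, is at most $c/t$. Pinsker's inequality then lets Bob sample a surrogate first message from public randomness; this shifts the joint distribution in statistical distance by $O(\sqrt{c/t})$. Bob embeds a fresh length-$k/t$ instance as block $i$, and the two players sample the remaining blocks: Alice takes the later blocks with private randomness, and both take the prefix publicly, which is consistent because the prefix is shared. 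The result is an $(r-1)$-round protocol, now with Bob speaking first, for the length-$k/t$ problem. Its error is $\varepsilon+O(\sqrt{c/t})$ and its messages still have at most $c$ bits. Since the hard distribution is symmetric under swapping the players and reversing the comparison, the same step applies when Bob speaks first.

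Apply the step $r$ times with $t=Kcr^2$. The accumulated error is at most $1/3+r\cdot O(\sqrt{1/(Kr^2)})<1/2$ once $K$ is a large constant. If $k>t^r$, what remains is a zero-round protocol for a nontrivial instance of length at least $1$ with error below $1/2$. That is impossible: at length $1$ the hard distribution makes the answer an unbiased bit independent of whoever outputs. Hence $k\le (Kcr^2)^r$, giving $c\ge k^{1/r}/(Kr^2)=\Omega((\log m)^{1/r}/r^2)$. The total communication is at least the largest single message, so the bound holds for total communication as well.

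I expect the main obstacle to be the round elimination step itself. The information bound has to be conditioned on Bob's prefix, since that is exactly what makes the augmented embedding consistent. The surrogate message must also be sampled so that the rest of the simulated transcript is distributed correctly to within $O(\sqrt{c/t})$; this is the delicate part of \cite{sv08}. My first attempt would be to condition on $(i,x_{<i})$, apply the chain rule to $I(M;x_i\mid i,x_{<i})$, and use Pinsker's inequality on the posterior of $M$.
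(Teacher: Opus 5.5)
The paper gives no proof of this statement; it is imported from \cite{sv08}. Your skeleton is essentially how that bound is obtained: distributional round elimination with an additive $O(\sqrt{c/t})$ loss per round and $t=\Theta(cr^2)$ blocks per level. The elimination step is fine in outline. The chain rule gives $\frac{1}{t}\sum_i I(M;x_i\mid x_{<i})\le c/t$, and Pinsker plus Jensen gives the $O(\sqrt{c/t})$ shift. The delicate point you flag is handled by letting Alice draw $x_{>i}$ from its conditional law given $(x_{<i},x_i,M')$, not independently. After that, the simulated joint law is within $\bE_{x_i}\|P_{M\mid x_{\le i}}-P_{M\mid x_{<i}}\|$ of the true one.

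\textbf{The gap is at the bottom of the recursion.} In the model used here, the receiver of the last message outputs; this is the model in which $r=1$ is the one-way setting of Theorem~\ref{thm: gap-gt hardness}. So after $r$ eliminations you have a zero-round protocol whose output is a function of the outputter's own input. You therefore need $[x>y]$ to be unbiased and independent of that input. At length $1$ this forces the outputter's bit to be constant: $[x>y]=0$ whenever $y=1$, so independence from $y$ requires $y\equiv 0$ (dually, $x\equiv 1$ if Alice outputs). Such a base singles out one player, so it cannot be invariant under the player swap you invoke. The natural symmetric base, uniform on $\{(0,1),(1,0)\}$, lets the outputter read off the answer, and no contradiction arises.

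Your recursion is also inconsistent with that symmetry claim. If Bob fills the blocks after $i$ arbitrarily, his string at the next level is not a product of $t$ independent blocks. Then the chain-rule bound for his message, which you need when he speaks first, is not available.

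Either of two standard repairs closes this.
\begin{enumerate}
\item[(a)] Alternate roles. At each level, the multi-block player is whoever speaks first among the remaining rounds. Ties $x_i=y_i$ now occur, so the copying player pads his suffix so they resolve as ``not greater'': all $1$s if he holds $y$, all $0$s if he holds $x$. The bottom level is in the spirit of the Augmented-Index reduction of Appendix~\ref{app: GT hardness}: the receiver of the $r$-th message holds a fixed bit, and the answer equals the sender's uniform bit.
\item[(b)] Keep symmetry. Let the copying player fill his later blocks with fresh samples from the block marginal, so every level is the flat distribution (uniform first-difference position, both strings uniform). Stop at a constant length $L_0$. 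There the outputter's best error from his own string is $1/2-\Theta(1/\sqrt{L_0})$, which exceeds $1/3+O(1/\sqrt{K})$ once $L_0$ is large. Your conclusion $k\le t^r$ becomes $k\le L_0t^r$, and the bound is unchanged up to constants.
\end{enumerate}
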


As before, we will also use the related \tGGT problem.
The same lower bound also holds for this variant via a reduction similar to that in Appendix \ref{app: GT hardness}.
We omit this proof for brevity.

\begin{corollary}
Let $r\in\bZ^+$. Any $r$-round randomized protocol for $\GGT(m,\sqrt{m})$ that succeeds with probability at least $2/3$ must communicate at least $\Omega((\log m)^{1/r}/r^2)$ bits.
\label{cor:r_round_GGT_LB}
\end{corollary}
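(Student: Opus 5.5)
The plan is to reduce $\tGT(m)$ to $\GGT(m',\sqrt{m'})$ with $m'$ polynomially related to $m$, and then invoke the $r$-round lower bound of \cite{sv08}. Since $\log m' = \Theta(\log m)$, the bound $\Omega((\log m)^{1/r}/r^2)$ transfers up to constants. The reduction is the same padding idea as in Appendix~\ref{app: GT hardness}.

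Given a $\tGT(m)$ instance $(G_A,G_B)$, set $g=m$ and $m' = m^2+m$, so that $\sqrt{m'}\le m+1$; after a harmless constant-factor rescaling we can use gap $g$ with $g\ge \sqrt{m'}$. Alice maps her input to $G_A' = g\cdot G_A$. Bob maps his to $G_B' = g\cdot G_B + \lfloor g/2\rfloor$. This is where ties are handled: plain scaling would leave equal inputs with gap $0$, but the asymmetric offset avoids that.

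The promise and correctness then follow by a short case check.
\begin{itemize}
\item If $G_A>G_B$, then $G_A'-G_B' \ge g-\lfloor g/2\rfloor \ge g/2$.
\item If $G_A\le G_B$, then $G_B'-G_A'\ge \lfloor g/2\rfloor$.
\end{itemize}
So the gap is at least $\lfloor g/2\rfloor$ in either case. Taking $g = 4m$ and $m' = 4m^2+4m$ gives $\lfloor g/2\rfloor = 2m \ge \sqrt{m'}$. Moreover $G_A'>G_B'$ holds if and only if $G_A>G_B$, which covers the convention that $G_A>G_B$ is false on ties. Both transformations are local to each player, so any $r$-round protocol for $\GGT(m',\sqrt{m'})$ with success probability $2/3$ directly yields an $r$-round protocol for $\tGT(m)$ with the same communication and success probability.

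It remains to translate the bound. From \cite{sv08}, that communication is $\Omega((\log m)^{1/r}/r^2)$. Since $\log m' \le 2\log m + O(1)$, we get $(\log m)^{1/r} \ge ((\log m')/3)^{1/r} \ge (\log m')^{1/r}/3$ for large $m$. Hence the communication is $\Omega((\log m')^{1/r}/r^2)$, and this holds for every $m'$ of the form $4m^2+4m$.

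The main obstacle is extending to arbitrary $m'$. For general $m'$, choose $m = \lfloor \sqrt{m'}/3\rfloor$; the embedded instance then lives in $[m']$ with the required gap. Monotonicity in the range size absorbs the constants, since an instance over a smaller range is also an instance over $[m']$.
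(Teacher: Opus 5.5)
Your approach is the right one and is essentially what the paper has in mind. The paper omits the proof, saying only that a padding reduction like the one in Appendix~\ref{app: GT hardness} works. Mapping $G_A\mapsto gG_A$ and $G_B\mapsto gG_B+\lfloor g/2\rfloor$ is the arithmetic form of appending low-order bits; it preserves the answer and handles ties correctly.

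However, the parameters you actually fix violate the promise. With $g=4m$ and $m'=4m^2+4m$ you have $(2m)^2<m'$, so $\lfloor g/2\rfloor=2m<\sqrt{m'}$. Every input with $G_A-G_B\in\{0,1\}$ is then mapped to a pair at distance exactly $2m$. On these inputs the $\GGT(m',\sqrt{m'})$ protocol carries no guarantee, so you do not obtain a protocol for all of $\tGT(m)$. Your first choice $g=m$, $m'=m^2+m$ has the same defect. You need $\lfloor g/2\rfloor^2\ge m'\ge gm+\lfloor g/2\rfloor$, which forces $g\ge 4m+2$. For instance, $g=4m+2$ with $m'=(2m+1)^2$ works, as does $g=8m$ with $m'=16m^2$. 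With either choice your translation of the bound via $\log m'=2\log m+O(1)$ goes through unchanged.

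The extension to arbitrary $m'$ also needs a fix. $\GGT(M,\sqrt{M})$ is not monotone in $M$ in the way you use it. An instance over a smaller range $[M]$ only guarantees gap $\sqrt{M}<\sqrt{m'}$, so it is not an instance of $\GGT(m',\sqrt{m'})$. Concretely, with $m=\lfloor\sqrt{m'}/3\rfloor$ and $g=4m$ your gap is only about $2\sqrt{m'}/3$. Choose the gap from the target range instead: put $s=\lceil\sqrt{m'}\rceil$, $g=2s$, and $m=\lfloor(m'-s)/(2s)\rfloor=\Theta(\sqrt{m'})$. Then all mapped values lie in $[m']$, the gap is at least $s\ge\sqrt{m'}$ in both cases, and $\log m=\Theta(\log m')$. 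This yields the corollary for every $m'$.
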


\subsection{Completing the Proof}

We are now ready to prove Theorem \ref{thm:multiround}.

\begin{proof}[Proof of \Cref{thm:multiround}]
For the lower bound, observe that the reduction in \Cref{sec:loglog_lb} preserves the number of communication rounds. Consequently, any $r$-round protocol for $\tCoin_n$ immediately yields an $r$-round protocol for $\GGT$ with the same communication complexity. Then, invoking \Cref{cor:r_round_GGT_LB} gives the $\Omega((\log\log n)^{1/r}/r^2)$ lower bound.

For the upper bound, we modify the protocol $\mathcal{P}$ from \Cref{sec:UB} taken with $\delta=1/1000$. Observe that the protocol remains correct provided the players can distinguish among the three cases $ y\geq x+21, x\geq y+21$, or $|x-y|<21$ in Line 3(b)(iii). We claim that this can be accomplished with error at most $1/500$ using $r$ rounds and $O((\log\log n)^{1/r}\cdot r\log r)$ bits of communication. Indeed, the players simply invoke the $r$-round protocol for $\tGT$ (\Cref{lem:r_round_GT_UB}) twice in parallel: once to compare $y$ with $x+21$, and once to compare $x$ with $y+21$. By the union bound, the probability that either comparison is incorrect is at most $1/500$. Since $x$ and $y$ are both bounded by $O(\log n)$, the total communication is $O((\log\log n)^{1/r}\cdot r\log r)$ bits.

The remainder of the protocol is identical to that of the one-way protocol, and the correctness analysis is unchanged.
In particular, in the final round, the players still also communicate extra information using $O(\log(1/\delta)) = O(1)$ bits of communication, corresponding to the case where they hold similar numbers of coins.
The total error probability is therefore at most $\frac{1}{1000}+\frac{1}{500}<\frac{1}{100}$. Hence the resulting protocol succeeds with probability at least $99/100$. Finally, although our protocol is randomized, it can be made deterministic by fixing its internal randomness for which it remains successful with probability $99/100$ over the distribution of the coin tosses.
\end{proof}

Again, we remark that by choosing $\delta$ appropriately, this bound can be made to work across a suitable range of constant error-rates.

\flushbottom

\bibliographystyle{alpha}
\bibliography{bibliography}

\appendix

\section{The Hardness of \tGGT}\label{app: GT hardness}

For completeness, we now prove Theorem \ref{thm: gap-gt hardness}, but stress that we consider this to be folklore, and do not claim it to be novel.

\GTHARD*

Our starting point is the canonical \tAUGINDEX two-player communication game.

\begin{definition}
    In the $\tAUGINDEX(m)$ communication game, Alice is given a uniform random bit string $X\in\{0,1\}^m$, Bob is given a uniform random $J\in[m]$ together with $X[i]$ for all $i < J$, and the goal is to output $X[J]$.
\end{definition}

The one-way communication complexity of this problem is well known.

\begin{theorem}[\cite{mnsw95}]\label{thm: aug-index hardness}
    Any (potentially randomised) one-way communication protocol which solves $\tAUGINDEX(m)$ with probability $1/2 + \varepsilon$, for any constant $\varepsilon > 0$, must send a message of size $\Omega(m)$.
\end{theorem}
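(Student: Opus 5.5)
The plan is the standard information-theoretic argument for Augmented Index. First, by Yao's principle (as used elsewhere in the paper), it suffices to consider a deterministic one-way protocol that succeeds with probability $1/2+\varepsilon$ over the stated input distribution: $X$ uniform on $\{0,1\}^m$ and $J$ uniform on $[m]$, independent. Let $M=M(X)$ be Alice's message, of length $C$ bits. Then
\[
C \geq H(M) \geq I(M;X) = \sum_{i=1}^m I(M; X[i] \mid X[1],\dots,X[i-1]),
\]
where the last step is the chain rule for mutual information. The goal is to lower bound each term on the right by the advantage Bob achieves at index $i$.

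Fix $i$, and let $p_i$ be the success probability of the protocol conditioned on $J=i$. In that case Bob's output is a function of $(M, X[1],\dots,X[i-1])$ alone, and it predicts the bit $X[i]$ correctly with probability $p_i$. By Fano's inequality for a binary variable,
\[
H(X[i]\mid M, X[1],\dots,X[i-1]) \leq h(p_i),
\]
where $h$ is the binary entropy function. Here $p_i\ge 1/2$ may be assumed after flipping outputs where advantageous; more carefully, write the bound as $h(\max(p_i,1-p_i))$ and note $h(p_i)$ is a valid upper bound either way. Since $X[i]$ is uniform and independent of $X[1],\dots,X[i-1]$, we have $H(X[i]\mid X[1],\dots,X[i-1])=1$. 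Therefore
\[
I(M;X[i]\mid X[1],\dots,X[i-1]) \geq 1-h(p_i).
\]

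Summing over $i$ and using concavity of $h$ (Jensen's inequality) gives
\[
C \geq m - \sum_i h(p_i) \geq m\left(1-h\!\left(\tfrac1m\textstyle\sum_i p_i\right)\right).
\]
Now $\frac1m\sum_i p_i$ is exactly the overall success probability, which is at least $1/2+\varepsilon$. Because $h$ is decreasing on $[1/2,1]$, this yields
\[
C \geq m\,(1-h(1/2+\varepsilon)) \geq m\cdot \tfrac{2}{\ln 2}\varepsilon^2 = \Omega(m)
\]
for constant $\varepsilon$, via the standard estimate $1-h(1/2+\varepsilon)\ge 2\varepsilon^2/\ln 2$.

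The main point needing care is that Bob's knowledge of the prefix $X[1],\dots,X[i-1]$ is precisely the conditioning in the chain rule, so Fano's inequality applies term by term. The only other subtlety is the Jensen step: a protocol may have very uneven success across indices, and concavity is what lets the average success probability control the sum.
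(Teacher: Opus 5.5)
The paper does not prove this statement; it imports it from \cite{mnsw95} and uses it as a black box in Appendix A, so there is no paper proof to compare against. Your proposal is correct and self-contained. It is the standard information-theoretic argument:
\begin{itemize}
\item fix the randomness by averaging, so the protocol is deterministic;
\item write $C \ge H(M) \ge I(M;X) = \sum_i I(M;X[i]\mid X_{<i})$;
\item apply binary Fano to Bob's predictor $f(M,i,X_{<i})$, which bounds each term below by $1-h(p_i)$;
\item use concavity and monotonicity of $h$ to turn the average success $\frac1m\sum_i p_i \ge 1/2+\varepsilon$ into $C \ge m\bigl(1-h(1/2+\varepsilon)\bigr) \ge (2/\ln 2)\,\varepsilon^2 m$.
\end{itemize}
The Fano step uses the independence of $J$ and $X$: conditioning on $J=i$ leaves $X$ uniform. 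I checked the final estimate, which follows from $(1+x)\ln(1+x)+(1-x)\ln(1-x) \ge x^2$ with $x=2\varepsilon$.

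Two cosmetic points. First, the remark about flipping outputs is unnecessary. Binary Fano gives $H(X[i]\mid M,X_{<i}) \le h(1-p_i) = h(p_i)$ for every $p_i\in[0,1]$, so Jensen applies directly to the actual $p_i$'s. Second, if messages may have variable length at most $C$, use $H(M)\le C+1$ instead of $H(M)\le C$; this does not affect the $\Omega(m)$ conclusion.

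Compared with the bare citation, your route proves exactly the distributional version the paper needs: uniform $X$ and $J$, with success averaged over the inputs. That is how the paper defines \textsf{Augmented-Index} and how it invokes the theorem in Appendix A. You also get an explicit $\varepsilon^2$ dependence in the constant.
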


\subsection{Proof of \texorpdfstring{\Cref{thm: gap-gt hardness}}{Theorem 4.1}}

    We now prove Theorem \ref{thm: gap-gt hardness}. During this proof, it will be useful to think of the zeroth bit of a bit-string as the most significant bit when interpreted as a binary number.

    Let $\cP_{\text{GT}}$ denote any (potentially randomised) protocol which solves $\GGT(m, \sqrt{m})$ with probability $1/2 + \varepsilon$. Our proof proceeds through a standard reduction from $\tAUGINDEX$ to  $\GGT$. Suppose that Alice and Bob begin with an instance of $\tAUGINDEX(\frac{1}{2}\cdot \log m)$. To begin, Alice constructs a number $A$ whose binary representation contains $\log m$ bits, given by
    \[
        A[i] = \begin{cases}
            X[i] \quad &\text{ if } i < \log(m)/ 2\\
            1    \quad &\text{ if } i \geq \log(m)/ 2.
        \end{cases}
    \]
    Alice then runs $\cP_{\text{GT}}$ on this input $A$, and sends the resulting message to Bob.

    Next, Bob uses their inputs, $J$ and $X_{<J}$, to construct the number $B$ corresponding to the $\log m$ bit string defined by
    \begin{align*}
        B[i] = \begin{cases}
            X[i] \quad &\text{ if } i < J\\
            1    \quad &\text{ if } i = J\\
            0    \quad &\text{ if } i > J.
        \end{cases}
    \end{align*}
    Bob then uses the message to continue running $\cP_{\text{GT}}$ on $B$ before producing an output as follows: if $\cP_{\text{GT}}$ reports $A > B$, then Bob outputs $X[J] = 1$, and otherwise outputs $X[J] = 0$.

    We see that both $A$ and $B$ are upper bounded by $m$, since they have binary representations of length $\log m$.
    We also see that $\vert A - B \vert \geq \sqrt{m}$, as required, since for all $i \geq \log(m)/2$, $A[i] = 1$ whilst $B[i] = 0$.  

    We now analyse Bob's success probability in this reduction.

    \begin{claim}
        If $X[J] = 0$, then $B > A$, and otherwise $A > B$ with probability $1 - 1/2^{\log(m)/2-J-1}$ over the randomness of $X$.
    \end{claim}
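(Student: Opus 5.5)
We compare the binary strings $A$ and $B$ bit by bit, starting from the most significant bit (index $0$). For every $i < J$ we have $i < \log(m)/2$, since $J$ indexes a bit of the $\tAUGINDEX(\frac{1}{2}\log m)$ instance. Hence $A[i] = X[i] = B[i]$ for all $i<J$, so the two numbers agree on their first $J$ bits. The comparison is therefore decided by the suffixes starting at position $J$, where $B[J]=1$ and $B[i]=0$ for all $i>J$.

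\emph{Case $X[J]=0$.} Here $A[J]=0<1=B[J]$. Since $A$ and $B$ agree on all more significant bits, the first position where they differ is $J$, and $B$ is larger there. Thus $B>A$ deterministically, regardless of the remaining bits of $A$.

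\emph{Case $X[J]=1$.} Now $A[J]=B[J]=1$, so the comparison passes to the bits $i>J$. Bob's bits there are all zero. Alice's bits there consist of $X[J+1],\dots,X[\log(m)/2-1]$, followed by the padding ones at positions $i\geq \log(m)/2$. Because $A$ always has $A[i]=1$ at the padded positions and $B[i]=0$ at all positions $i>J$, $A$ has a $1$ at some position beyond $J$ where $B$ has $0$. Therefore $A>B$ always holds in this case. The clause ``with probability $1 - 1/2^{\log(m)/2-J-1}$'' is thus a weaker statement than what actually holds, and it follows trivially. If one wants to recover that exact form, note that $A>B$ already follows from some $X[i]=1$ with $J<i<\log(m)/2$, and all these bits are zero only with probability $2^{-(\log(m)/2-J-1)}$ under uniform $X$. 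Since the padding gives certainty anyway, the bound holds.

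The main obstacle is bookkeeping rather than any real difficulty. One must check that $J<\log(m)/2$, so that $B$'s prefix coincides with $A$'s prefix, and that the padding region is nonempty, so that $A>B$ is forced in the second case. With both checks in place, the claim gives that Bob's output equals $X[J]$ with high probability.
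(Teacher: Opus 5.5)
Your proof is correct, and your treatment of the case $X[J]=0$ is the same as the paper's. For $X[J]=1$ you argue differently. The paper ignores the padding and uses only that $B$ is zero after position $J$, so that $A>B$ whenever some $X[i]=1$ with $J<i<\log(m)/2$; your closing aside reproduces this. Your main argument instead uses the padded ones at positions $i\geq\log(m)/2$, where $B[i]=0$, to force $A>B$ with certainty, so the stated probability is only a lower bound.

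The paper's route has the advantage of not depending on the padding, and this matters. The asymmetric padding that gives you certainty is exactly what breaks the $\sqrt{m}$-gap promise in the other case. When $X[J]=0$ and $X[i]=1$ for all $J<i<\log(m)/2$, one gets $B-A=1$; for instance, $\log m=8$, $J=1$, $X=(x_0,0,1,1)$ gives $A=128x_0+63$ and $B=128x_0+64$. With identical padding on both sides, the gap holds whenever $A\neq B$, the paper's probability becomes exact, and your shortcut is no longer available.

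In either version, an event of conditional probability $2^{-(\log(m)/2-J-1)}$ on which the reduction can fail remains. So your final remark that Bob outputs $X[J]$ with high probability still requires the paper's conditioning on $J\leq\log(m)/2-\log\log m-1$.
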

    \begin{proof}
        By construction, $B[J] = 1$, and all more significant bits between $A$ and $B$ are identical, so it is clear that if $X[J] = 0$, then $B > A$.

        Otherwise, it must be that $X[J] = 1$, and so the only differing bits in the binary representations of $A$ and $B$ must be less significant than this one.
        Then, since all such bits are set to zero in the construction of $B$, $A > B$ as long as at least one of these bits is set to one in $X$, which occurs with probability $1 - 1/2^{\log(m)/2-J-1}$.
    \end{proof}

    Now, we turn to see that
    \begin{align*}
        \Pr&(\text{Bob is correct} \mid X[J] = 0)\\
           &\geq \Pr(\text{$\cP_{\text{GT}}$ is correct} \mid X[J] = 0, B > A)\cdot\Pr(B > A \mid X[J] = 0)\\
           &\geq (1/2 + \varepsilon)\cdot 1\\
           &= 1/2 + \varepsilon.
    \end{align*}

    Next, by the uniform distribution of $J$, we see that
    \[
        \Pr(J \leq \log(m)/2 - \log\log(m) - 1) = 1 - \frac{\log\log m + 1}{\log(m)/2} = 1-o(1).
    \]
    Then, conditioning on this,
    \begin{align*}
        \Pr(A > B \mid X[J] = 1, J \leq \log(m)/2 - \log\log m - 1) &= 1 - 1/2^{\log(m)/2-\log\log m -1}\\
            &\geq 1 - 1/2^{\Omega(\log m)}\\
            &= 1 - o(1).
    \end{align*}
    Thus, we can write
    \begin{align*}
        \Pr&(A > B \mid X[J] = 1)\\
           &\geq \Pr(A > B \mid X[J] = 1, J \leq n - \log n - 1)\cdot\Pr(J \leq n - \log n - 1)\\
           &\geq (1 - o(1))\cdot(1-o(1))\\
           &= 1 - o(1),
    \end{align*}
    and so
    \begin{align*}
        \Pr&(\text{Bob is correct} \mid X[J] = 1)\\
           &\geq \Pr(\text{$\cP_{\text{GT}}$ is correct} \mid X[J] = 1, A > B)\cdot\Pr(A > B \mid X[J] = 1)\\
           &= (1/2 + \varepsilon)\cdot(1-o(1))\\
           &\geq 1/2 + \varepsilon/2\tag{For sufficiently large $n$.}.
    \end{align*}

    Now, finally, using that $X[J]$ is a uniform random bit, we can combine these to write
    \begin{align*}
        \Pr&(\text{Bob is correct})\\
           &= \Pr(\text{Bob is correct} \mid X[J] = 0)\cdot\Pr(X[J] = 0) + \Pr(\text{Bob is correct} \mid X[J] = 1)\cdot\Pr(X[J] = 1)\\
           &\geq (1/2 + \varepsilon)\cdot (1/2) + (1/2 + \varepsilon/2)\cdot (1/2)\\
           &= 1/2 + 3\varepsilon/4,
    \end{align*}
    which is a constant greater than $1/2$.
    Therefore, by Theorem \ref{thm: aug-index hardness}, it must have been that the protocol, and thus $\cP_{\text{GT}}$, sent a message of size $\Omega(\log(m)/2) = \Omega(\log m)$.

\section{Bounded-Round Protocol for \tGT}
\label{app:r_round_GT}

We now prove \Cref{lem:r_round_GT_UB} for completeness.

\rroundGT*

\begin{proof}
Let $N=\log m$. Alice and Bob first interpret their inputs as $N$-bit strings. Their objective is to identify the first location (i.e. most significant) at which their strings differ.

Initially, there are $k_0=N$ possible locations for the first difference. The goal of each round $t \in [r]$ is to narrow it down to $k_t = k_{t-1}/N^{1/r}$ possible locations. After $r$ rounds, only a single location remains after which the players answer.

For the first round, players partition the $k_0 = N$ locations into $N^{1/r}$ consecutive blocks of size $N^{1-1/r} = k_1$. Using public randomness, Alice and Bob agree on $N^{1/r}$ independent uniform hash functions mapped to $[1000r]$. For every block $j \in [N^{1/r}]$, Alice sends the $j$-th hash of her $j$-th block to Bob. Bob then performs the ``equality tests'' by comparing them to the hashes of his own blocks, and identifies the first block whose hashes differ. In the second round, Alice and Bob recurse on this block with their roles reversed. This continues for $r$ rounds until there is only $k_r = 1$ possible location remaining.

We now analyze the success probability. Each equality test is one-sided. If Alice and Bob have the same string, they always pass. And if they have different strings, they will fail the test with probability $1-1/(1000r)$. Let $i$ denote the first location at which Alice's and Bob's inputs differ. The protocol is correct provided that, in every round, the block containing location $i$ ``correctly'' fails the test. Since there is one relevant equality test per round, the union bound implies that this event occurs with probability at least $0.999$.

Finally, observe that there are a total of $r\cdot N^{1/r}$ equality tests. Each one costs $O(\log{r})$ bits of communication. As $N = \log{m}$, the total communication is $O((\log m)^{1/r}\cdot r\log r)$ as wished.
\end{proof}

\end{document}